\newsavebox{\@brx}
\newcommand{\llangle}[1][]{\savebox{\@brx}{\(\m@th{#1\langle}\)}%
  \mathopen{\copy\@brx\mkern2mu\kern-0.9\wd\@brx\usebox{\@brx}}}
\newcommand{\rrangle}[1][]{\savebox{\@brx}{\(\m@th{#1\rangle}\)}%
  \mathclose{\copy\@brx\mkern2mu\kern-0.9\wd\@brx\usebox{\@brx}}}
\newcommand{\tikzxmark}{%
\tikz[scale=0.23] {
    \draw[line width=0.7,line cap=round] (0,0) to [bend left=6] (1,1);
    \draw[line width=0.7,line cap=round] (0.2,0.95) to [bend right=3] (0.8,0.05);
}}
\newcommand{\tikzcmark}{%
\tikz[scale=0.23] {
    \draw[line width=0.7,line cap=round] (0.25,0) to [bend left=10] (1,1);
    \draw[line width=0.8,line cap=round] (0,0.35) to [bend right=1] (0.23,0);
}}
\newtheorem{lemma}{Lemma}
\begin{document}

\title{Evaluating a quantum-classical quantum Monte Carlo algorithm with Matchgate shadows}

\author{Benchen Huang}
\affiliation{AWS Worldwide Specialist Organization, Seattle, WA 98170, USA}
\affiliation{Department of Chemistry, University of Chicago, Chicago, IL 60637, USA}%

\author{Yi-Ting Chen}
\affiliation{Amazon Braket, New York, NY 10018, USA}

\author{Brajesh Gupt}
\affiliation{AWS Worldwide Specialist Organization, Seattle, WA 98170, USA}

\author{Martin Suchara}
\email{M.S.' contributions to this work were made while employed by Amazon.}
\affiliation{Microsoft Azure Quantum, Redmond, WA 98052, USA}

\author{Anh Tran}
\affiliation{AWS Worldwide Specialist Organization, Seattle, WA 98170, USA}

\author{Sam McArdle}
\email{sammcard@amazon.com}
\affiliation{AWS Center for Quantum Computing, Pasadena, California, CA 91125, USA}%

\author{Giulia Galli}
\email{gagalli@uchicago.edu}
\affiliation{Department of Chemistry, University of Chicago, Chicago, IL 60637, USA}
\affiliation{Pritzker School of Molecular Engineering, University of Chicago, Chicago, IL 60637, USA}
\affiliation{Materials Science Division and Center for Molecular Engineering, Argonne National Laboratory, Lemont, IL 60439, USA}%

\date{\today}

\begin{abstract}
Solving the electronic structure problem of molecules and solids to high accuracy is a major challenge in quantum chemistry and condensed matter physics. The rapid emergence and development of quantum computers offer a promising route to systematically tackle this problem. Recent work by Huggins et al. [\textit{Nature}, \textbf{603}, 416 (2022)] proposed a hybrid quantum-classical quantum Monte Carlo (QC-QMC) algorithm using Clifford shadows to determine the ground state of a Fermionic Hamiltonian. This approach displayed inherent noise resilience and the potential for improved accuracy compared to its purely classical counterpart. Nevertheless, the use of Clifford shadows introduces an exponentially scaling post-processing cost. In this work, we investigate an improved QC-QMC scheme utilizing the recently developed Matchgate shadows technique [\textit{Commun. Math. Phys.}, \textbf{404}, 629 (2023)], which removes the aforementioned exponential bottleneck. We observe from experiments on quantum hardware that the use of Matchgate shadows in QC-QMC is inherently noise robust. We show that this noise resilience has a more subtle origin than in the case of Clifford shadows. Nevertheless, we find that classical post-processing, while asymptotically efficient, requires hours of runtime on thousands of classical CPUs for even the smallest chemical systems, presenting a major challenge to the scalability of the algorithm.
\end{abstract}

\maketitle

\section{Introduction} \label{introduction}

The ability to accurately solve the Schr\"odinger equation for interacting electrons will help tackle a multitude of problems in physics, chemistry and materials science, relevant to applications ranging from drug discovery~\cite{lin2020review} to the design of functional materials~\cite{king2012computational, hammes2021integration}. In the past century, multiple efforts have been devoted to solving the Schr\"odinger equation on classical computers, either by using suitable approximations and mean-field theories~\cite{kohn1996density} or by employing nearly exact methods such as full configuration interaction (FCI)~\cite{helgaker2013molecular} and quantum Monte Carlo (QMC)~\cite{foulkes2001quantum, austin2012quantum}. However, known algorithms for FCI scale exponentially on classical computers, while scalable solutions using QMC often suffer from the sign problem~\cite{troyer2005computational} and other numerical instabilities. In essence, highly entangled many-body wavefunctions of interacting electrons are hard to represent and optimize on classical computers.

The advent of quantum computers offers a promising route to tackle the interacting electron problem, either through quantum~\cite{aspuru2005simulated} or hybrid quantum-classical algorithms~\cite{peruzzo2014variational, motta2020determining}. One popular algorithm to compute Fermionic ground states is the variational quantum eigensolver (VQE) and its variants~\cite{misiewicz2023implementation, baek2023say, smart2022accelerated, grimsley2023adaptive}, where a parameterized quantum circuit is used to generate the wavefunction, and its optimization is off-loaded to classical hardware. Nevertheless, current noisy intermediate-scale quantum (NISQ) hardware is limited by the depth of quantum circuits that it can implement. Moreover, VQE faces optimization challenges resulting from ansatz-- or noise--induced barren plateaus in energy landscapes~\cite{wang2021noise}, as well as large measurement overheads.

Recently Huggins et al.~\cite{huggins2022unbiasing} proposed a hybrid quantum-classical algorithm for quantum Monte Carlo (QC-QMC) and applied it to study the dissociation of diatomic molecules on Google's Sycamore quantum processor. The Monte Carlo algorithm is driven by sampling from a trial state $|\Psi_T\rangle$ prepared on the quantum computer, with the aim of producing smaller bias and better accuracy than its purely classical counterpart. The algorithm uses classical shadows~\cite{huang2020predicting} with random Clifford circuits to avoid iterative communication between classical and quantum hardware, which is desirable on near-term quantum devices as it minimizes latency from quantum-classical communication. Surprisingly, an inherent noise resilience was observed~\cite{huggins2022unbiasing} --- spurring both academic and industrial interest in the technique~\cite{xu2023quantum, zhang2022quantum, montanaro2023accelerating, mazzola2023quantum, kanno2023quantum, amsler2023quantum, kiser2023classical}. Nevertheless, currently known techniques for classically post-processing the Clifford shadows scale exponentially with system size, providing a barrier to achieving quantum advantage.

The inefficiencies of Clifford shadows for QC-QMC were recently addressed in Ref.~\cite{wan2023matchgate} by replacing the Clifford circuits with Matchgate circuits~\cite{zhao2021fermionic}. To the best of our knowledge, the proposal has not yet been experimentally evaluated in the literature (although we note recent numerical investigations probing the statistical properties of Matchgate shadows applied to chemical systems~\cite{scheurer2023tailored,kiser2023classical}). In particular, it is unclear whether the Matchgate shadows approach exhibits the noise resilience that underpinned the success of the original (unscalable) Clifford shadows approach~\cite{huggins2022unbiasing}, as the theoretical justification for the noise resilience of Clifford shadows does not immediately extend to the Matchgate case. In addition, while the proposal is formally efficient, its practicality for studying realistic systems has yet to be established.

In this work, we numerically and experimentally study the QC-QMC algorithm, incorporating Matchgate shadows. In particular, we probe the noise resilience of the algorithm on quantum hardware. We observe and theoretically characterize the persistence of the noise resilience that was observed for Clifford shadows, uncovering a more subtle microscopic origin. We also developed improvements to existing robust classical shadow protocols~\cite{chen2021robust, zhao2023group, wu2023error} which can mitigate state preparation noise, and may be of independent interest. The Matchgate shadows approach reduces the classical post-processing cost from exponential to formally polynomial. However, its use in QC-QMC has high degree polynomial scaling in the system size, resulting from computing the local energy of each QMC walker. The algorithm also has large constant factors resulting from the number of samples used in the classical shadow post-processing, as well as the number of QMC walkers and the number of QMC timesteps. These final two factors are properties of the QMC algorithm itself, rather than aspects of the quantum algorithm. We observe hours of post-processing runtime for even the smallest chemical systems, using thousands of CPUs. This fundamentally challenges the scalability of the approach. We thus validate the persistence of one of the major strengths of QC-QMC, as well as highlight the challenges that must be overcome if the technique is to provide practical quantum advantage.

The rest of the paper is organized as follows. In Sec.~\ref{Preliminaries} we discuss how QC-QMC is performed and the basics of classical shadows, which serve as prerequisites for subsequent discussions. In Sec.~\ref{Results}, we first discuss prior work on noise robust classical shadows~\cite{chen2021robust,koh2022classical} (and their recently developed Matchgate variants~\cite{zhao2023group,wu2023error}) and give an extension to these methods that mitigate state preparation noise in QC-QMC. Second, we experimentally and analytically confirm the natural noise resilience of the Matchgate shadow protocol as used in QC-QMC. Finally, we apply QC-QMC to simulate the dissociation of hydrogen, and the ground state energy of a solid-state spin defect system, on different quantum hardware systems. These calculations reach agreement with the reference values even in the absence of error mitigation. In Sec.~\ref{Discussions}, we discuss the merits and shortcomings of QC-QMC and how it is related to other quantum-classical hybrid algorithms and purely classical QMC methods. Finally, Sec.~\ref{Conclusions} concludes our work with a summary.

\begin{figure*}[hbt!]
    \centering
    \includegraphics[width=\textwidth]{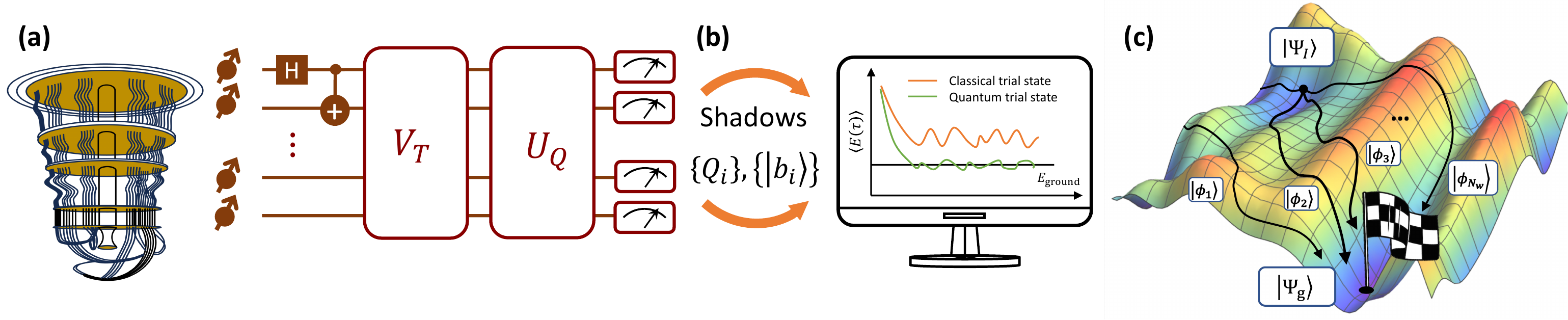}
    \caption{Workflow for the hybrid quantum-classical quantum Monte Carlo (QC-QMC) algorithm: (a). An equal superposition of the all-zero state $|\textbf{0}\rangle$, and the quantum trial state $|\Psi_T\rangle$ is prepared on a quantum computer, followed by the twirling of random unitary circuits $U_Q$ and measurements; (b). The measurement outcomes $\{|b_i\rangle\}$ and the random circuit unitary $U_Q$---which together constitute the information required to reconstruct classical shadows---are communicated to the classical computers; (c). The QMC procedure is carried out on a classical computer, where the walker states $|\phi_i\rangle$ evolve on the potential energy surface from the initial state $|\Psi_I\rangle$ towards the target ground state $|\Psi_g\rangle$.}
    \label{fig:workflow}
\end{figure*}

\section{Preliminaries} \label{Preliminaries}

The workflow of the QC-QMC method is shown in Fig.~\ref{fig:workflow}, where the computation on quantum and classical hardware is partitioned by classical shadows. In this section, we first provide a brief overview of the auxiliary-field quantum Monte Carlo (AFQMC) algorithm~\cite{motta2018ab} used in the QC-QMC method. Then we introduce the classical shadows formalism and discuss its use in QC-QMC. Readers familiar with this background material can skip to the summary of our results in Sec.~\ref{Results}.

\subsection{Auxiliary-field quantum Monte Carlo} \label{AFQMC}

Within the electronic structure community, QMC refers to a family of methods that bypass the explicit optimization of the many-body wavefunction in an exponentially large Hilbert space. Specifically, probabilistic sampling is applied in a subspace of the full Hilbert space, resulting in a polynomial scaling of the memory required for the evaluation of the ground state energy of a given system. The auxiliary-field quantum Monte Carlo (AFQMC) method is one example of projector QMC methods where a stochastic imaginary time evolution of the wavefunction is carried out by propagating samples on a manifold of nonorthogonal Slater determinants. Each sample $\{|\phi_l\rangle\}$ is usually called a “walker”. The ground state energy is estimated as
\begin{equation}
    E = \frac{\sum_l w_l e^{i\theta_l} E_l^{\text{loc}}}{\sum_l w_l e^{i\theta_l}},\;\;\;\; E_l^{\text{loc}} = \frac{\langle \Psi_T|H|\phi_l\rangle}{\langle\Psi_T|\phi_l\rangle}, \label{eq:gs_estimator}
\end{equation}
where $E_l^{\text{loc}}$ is the local energy of each walker, $w_l, \theta_l$ are its weight and phase, $H$ is the electronic structure Hamiltonian, and $|\Psi_T\rangle$ is a trial state defined as an approximation to the true ground state. The walker wavefunction and its weight and phase are updated every timestep according to functions of $\frac{\langle \Psi_T|\phi_i\rangle}{\langle \Psi_T|\phi_j\rangle}$ that perform the imaginary time evolution, see App.~\ref{app:AFQMC} for details.

For a generic Hamiltonian, however, similar to other projector QMC methods, AFQMC suffers from the sign problem, or more precisely, the phase problem where the phases $\theta_l$ of the walkers are evenly distributed in $[0, 2\pi)$~\cite{motta2018ab, zhang201315}. It leads to the ground state energy estimator, i.e., Eq.~\ref{eq:gs_estimator} (computed from the ensemble average of all the walkers) experiencing an exponentially fast decay of the signal-to-noise ratio and a large statistical error. To remedy the phase problem, a common solution is the so-called phaseless approximation (ph)~\cite{motta2018ab}, where $\theta_l = 0$ is enforced by a modified update rule, thus constraining the evolution of the walkers, see App.~\ref{app:AFQMC} for details. Such an approximation, while controlling the phase problem, introduces a bias to the estimated ground state energy. The magnitude of the bias is largely determined by how well the trial state $|\Psi_T\rangle$ represents the exact ground state. On classical computers, $|\Psi_T\rangle$ is usually chosen as either a single Slater determinant, e.g., the Hartree-Fock state, or a linear combination of Slater determinants, to ensure a polynomial scaling in computational time~\cite{mahajan2021taming}.

The insight of Ref.~\cite{huggins2022unbiasing} is that on a quantum computer, one may efficiently compute overlaps between Slater determinant walker states and a much wider range of trial states, e.g., the unitary coupled cluster (UCC) state~\cite{peruzzo2014variational, mcclean2016theory}. We refer to implementing AFQMC in this way as QC-AFQMC. Such states are potentially “closer” to the target ground state than those adopted as classical trials, and could lead to more accurate ground state energy estimations. It is currently unclear what are the best AFQMC trial states that can be prepared on a quantum computer, in low circuit depth. This is a similar issue to that of choosing a well-motivated and implementable ansatz circuit in VQE. We will discuss this problem in Sec.~\ref{Discussions}, and assume for now that a suitable AFQMC trial state can be efficiently prepared on a quantum computer. Then, the central issue for the QC-AFQMC scheme lies in how to evaluate the overlap amplitude $\langle\Psi_T|\phi\rangle$ for each walker, at each timestep. Efforts in the literature have branched out in two directions: Xu and Li~\cite{xu2023quantum} designed circuits based on the Hadamard test~\cite{ekert2002direct} to efficiently compute the overlap between the walker states and the trial state. On the other hand, Huggins et al.~\cite{huggins2022unbiasing} computed the amplitude by using the classical shadows technique applied to $|\Psi_T\rangle$. The former approach suffers a significant overhead in quantum computation due to the large number of walkers typically used in AFQMC. In addition, it requires iterative communication between the classical and quantum hardware, since $|\Psi_T\rangle$ is queried at every timestep during the time evolution. We therefore followed the approach of Huggins et al. in this work, and we provide a detailed scaling comparison of the two approaches in Section~\ref{Discussions}.

Finally, as originally pointed out by Ref.~\cite{huggins2022unbiasing} and then re-emphasized by Ref.~\cite{mazzola2022exponential, lee2022response}, the overlap amplitudes $\langle \Psi_T|\phi\rangle$ decay exponentially with system size in typical systems. This decay suggests that to reduce the relative errors of overlap amplitudes within acceptable thresholds, an exponentially large number of measurement shots would be necessary, resulting in the QC-QMC approach becoming non-scalable. This is supported by benchmarks on a transverse-field Ising model with varying sizes and measurement shots (on a noiseless simulator) using a Green's function Monte Carlo method~\cite{mazzola2022exponential, lee2022response} (as the QMC component). This issue is expected to plague both classical and quantum-assisted quantum Monte Carlo algorithms, and can be seen as a manifestation of the general Quantum Merlin-Arthur hardness of electronic structure calculations~\cite{huggins2022unbiasing}. Nevertheless, Ref.~\cite{lee2022response} explored strategies for postponing or alleviating this exponential bottleneck, including using more sophisticated walker wavefunctions (e.g. beyond a single Slater determinant) in QMC. It is an open question as to whether these strategies can be made compatible with classical shadows QC-QMC (note that these overlaps could alternatively be measured using the Hadamard test). If the vanishing overlaps issue can be mitigated in sufficiently large system sizes (e.g. $\sim 100$ orbitals), then it may be possible to achieve practical quantum advantage over classical QMC methods (and potentially other classical quantum chemistry algorithms) for systems of interest~\cite{huggins2022unbiasing}. In this study, we only consider small molecular systems which are not at the scale where the exponentially vanishing overlaps become a concern, opting to focus solely on the practical implementation challenges of QC-QMC. We leave investigating methods for testing and mitigating vanishing overlaps for future work.

\subsection{Classical shadows} \label{Shadow tomography}

As mentioned in the previous subsection, the need to evaluate $\langle\Psi_T|\phi\rangle$ for all walkers at each timestep is a key consideration for the practicality of the algorithm. In Ref.~\cite{huggins2022unbiasing} it was shown that the measurements required in QC-AFQMC can be recast into the following framework. Let $\rho$ denote the density matrix of an $n$-qubit quantum state that we know how to prepare, and $\{O_i\}$ denote a collection of $M$ observables whose expectation values, i.e., $\text{tr}(O_i\rho)$ we wish to estimate. Classical shadow tomography~\cite{huang2020predicting} provides a way to estimate these quantities with a cost that only scales logarithmically with $M$. Specifically, it estimates each $\text{tr}(O_i\rho)$ up to some error $\epsilon$ by the following procedure: i) choose a distribution $\mathcal{D}$ of unitary transformations; ii) sample random unitaries $U \in \mathcal{D}$ from the chosen distribution and iii) measure the state $U \rho U^{\dagger}$ in the computational basis $\{|b\rangle\}$ (we refer to the tensor products of $\{|0\rangle, |1\rangle\}$ of each qubit as the \textit{computational basis}, i.e., $\{|b\rangle\}_{b\in \{0, 1\}^n }$) to obtain the measurement outcome $|b\rangle\langle b|$. Consider the state $U^{\dagger} |b\rangle\langle b| U$; in expectation, the mapping from $\rho$ to $U^{\dagger} |b\rangle\langle b| U$ defines a quantum channel,
\begin{equation}
\begin{split}
    \mathcal{M}(\rho) & := \underset{U\sim \mathcal{D}}{\mathbb{E}} \left[U^{\dagger} |\hat{b}\rangle\langle \hat{b}| U\right]\\
    & = \underset{U\sim \mathcal{D}}{\mathbb{E}} \sum_{b\in \{0,1\}^n} \langle b|U\rho U^{\dagger}|b\rangle U^{\dagger} |b\rangle\langle b| U, \label{eq:shadow_channel}
\end{split}
\end{equation}
where $\mathbb{E}$ denotes the operation of averaging, and the hat represents a statistical estimator.

In the classical shadows framework, we require $\mathcal{M}$ to be invertible, which is true if and only if the collection of measurement operators defined by drawing $U \in \mathcal{D}$ and measuring in the computational basis is tomographically complete. Assuming that these conditions are satisfied, we can apply $\mathcal{M}^{-1}$ to both sides of above equation, yielding
\begin{equation}
\begin{split}
    \rho = \underset{U\sim \mathcal{D}}{\mathbb{E}}[\hat{\rho}] & = \mathcal{M}^{-1}\left(\underset{U\sim \mathcal{D}}{\mathbb{E}} \left[U^{\dagger} |\hat{b}\rangle\langle \hat{b}| U\right]\right)\\
    & = \underset{U\sim \mathcal{D}}{\mathbb{E}} \left[\mathcal{M}^{-1} \left(U^{\dagger} |\hat{b}\rangle\langle \hat{b}| U\right)\right].
\end{split}
\end{equation}
We call the collection $\{\mathcal{M}^{-1}(U^{\dagger} |\hat{b}\rangle \langle \hat{b}| U)\}$ the classical shadows of $\rho$. These shadows can be used to estimate the expectation values $\text{tr}(O_i \rho)$,
\begin{equation}
    \langle O_i\rangle = \underset{U\sim \mathcal{D}}{\mathbb{E}}\text{tr}\left[O_i \mathcal{M}^{-1} \left(U^{\dagger} |\hat{b}\rangle\langle \hat{b}| U\right)\right],
\end{equation}
each within error $\epsilon$ with probability at least $1-\delta$, with a number of samples that scales as:
\begin{equation}
    N_{\text{sample}} = \mathcal{O}\left(\frac{\log(M/\delta)}{\epsilon^2} \max_{1< i < M}\text{Var}[\hat{o}_i]\right). \label{eq:sample_complexity}
\end{equation}
In the above equation, we define $\hat{o}_i$ as an estimator of $\text{tr}(O_i \rho)$ and $\text{Var}[\hat{o}_i]$ represents the variance of estimator $\hat{o}_i$, which could be bounded by the shadow norm of $O_i$~\cite{huang2020predicting}. Importantly for QC-AFQMC, $N_{\text{sample}}$ only scales logarithmically with the number of target observables $M$\footnote{We note that the scaling in Eq.~\ref{eq:sample_complexity} was rigorously proven in Ref.~\cite{huang2020predicting} using a median-of-means estimator. Ref.~\cite{helsen2023thrifty, zhao2021fermionic} discussed when a mean estimator would suffice for Clifford and Matchgate shadows, respectively. We observed numerically that a mean estimator suffices for shadows in our small system QC-QMC experiments.}.

Formally, the condition that the measurement channel is invertible is sufficient for performing the classical shadows protocol. In practice, it is desirable that the protocol is efficient both in terms of quantum and classical resources. This means that there should be an efficient procedure to sample unitaries $U$ from $\mathcal{D}$ and implement them on a quantum computer, and in addition the variance of the estimates $\hat{o}_i$ is polynomial in system size. Moreover, it should be efficient to compute the expectation values with respect to the shadows, $\text{tr}\left[O_i \mathcal{M}^{-1}\left(U^{\dagger} |\hat{b}\rangle\langle \hat{b}| U\right)\right]$, on a classical computer.

In QC-AFQMC, the overlap amplitude $\langle\Psi_T|\phi\rangle$ is not a physical observable. Nevertheless, it can be measured within the framework of classical shadows by rewriting it as~\cite{huggins2022unbiasing}
\begin{equation}
\begin{split}
    \langle\Psi_T|\phi\rangle & = 2\text{tr}\left(|\phi\rangle\langle \mathbf{0}|\rho\right)\\
    & = 2 \underset{U\sim \mathcal{D}}{\mathbb{E}} \text{tr} \left[|\phi\rangle\langle \mathbf{0}| \mathcal{M}^{-1}\left(U^{\dagger} |\hat{b}\rangle\langle \hat{b}| U\right)\right],
\end{split}
\end{equation}
where $\rho$ is the density matrix corresponding to the state $\frac{1}{\sqrt{2}}\left(|\mathbf{0}\rangle + |\Psi_T\rangle\right)$ and $|\mathbf{0}\rangle = |0\rangle^{\otimes n}$. Here $|\phi\rangle\langle \mathbf{0}|$ plays the role of the operator $O$.

Ref.~\cite{huggins2022unbiasing} used random Clifford circuits to perform classical shadows, where the overlap amplitude is post-processed as
\begin{equation}
    \langle\Psi_T|\phi\rangle = 2 f^{-1} \underset{U\sim \text{Cl}(2^n)}{\mathbb{E}} \left[\langle \mathbf{0}|U^{\dagger}|\hat{b}\rangle \langle\hat{b}|U|\phi \rangle\right]. \label{eq:ovlp_Clifford}
\end{equation}
In the above equation, $f = (2^n + 1)^{-1}$ is the only (non-trivial) eigenvalue of $\mathcal{M}$, since the Clifford group has only one non-trivial irreducible representation (irrep). The first term $\langle \mathbf{0}|U^{\dagger}|b\rangle$ can be efficiently computed using the Gottesman-Knill theorem~\cite{aaronson2004improved}, but the second term $\langle b|U|\phi\rangle$, in general, cannot be efficiently estimated due to $|\phi\rangle$ being a random Slater determinant. As noted in Ref.~\cite{huggins2022unbiasing}, the method of Ref.~\cite{tang2019quantum} could be used to efficiently estimate this quantity up to an additive error. Nevertheless, the $f^{-1}$ prefactor exponentially amplifies this error, eliminating the efficiency of the proposed scheme.

Another interesting observation of Ref.~\cite{huggins2022unbiasing} is that the evaluation of overlap ratios, which drives the AFQMC algorithm, is resilient to quantum hardware noise. This phenomenon can be understood by first noting that noise will change the prefactor $f$ to a modified prefactor $\widetilde{f}$ (as explained in detail in Sec.~\ref{subsection:robust_shadow}). This prefactor cancels out when taking the ratio, recovering the noiseless result without requiring any additional error mitigation. We refer to this as having an inherent noise resilience.

Refs.~\cite{zhao2021fermionic,wan2023matchgate,  low2022classicalshadows, ogorman2022fermionic} investigated the replacement of the random Clifford ensemble with a random Matchgate ensemble (in some cases restricting to Clifford Matchgates and/or number conserving Matchgates). The Matchgate circuit is the qubit representation of the Fermionic Gaussian transformation assuming the Jordan-Wigner (JW) transformation~\cite{wigner1928paulische} is used. We therefore adopt the JW mapping throughout this work. Unlike the Clifford group, the Matchgate group has $(n+1)$ (even) irreps, see App.~\ref{app:Matchgates} for details. We focus on the results of Ref.~\cite{wan2023matchgate}, which 
\begin{itemize}
    \item Proved the equivalence between the continuous and discrete Matchgate ensembles for classical shadows.
    \item Explicitly showed how to efficiently compute the overlap between trial states and Slater determinant states using Matchgate shadows, including bounding the worst case variance in the estimate.
\end{itemize}
We refer the interested reader to Ref.~\cite[Sec. III D]{wan2023matchgate} for a more detailed comparison between these recent works. The results of Ref.~\cite{wan2023matchgate} overcame the post-processing limitations of Clifford shadows, showing that overlap amplitude can be computed as
\begin{equation}
    \langle\Psi_T|\phi\rangle = 2\sum_{l=0}^n f^{-1}_{2l} \underset{Q \sim B(2n)}{\mathbb{E}} \text{tr} \left[|\phi\rangle\langle 0| \Pi_{2l} \left(U_Q^{\dagger} |\hat{b}\rangle\langle \hat{b}| U_Q\right)\right], \label{eq:ovlp_Matchgate}
\end{equation}
where the random Matchgate circuits $U_Q$ are prepared by sampling random signed permutation matrices $Q$ ($Q \in B(2n)$, where $B$ represents the Borel group). In Eq.~\ref{eq:ovlp_Matchgate}, $\Pi_{2l}$ is the projector associated with the $l$-th even irrep, and its eigenvalue $f_{2l}$ is
\begin{equation}
    f_{2l} = \begin{pmatrix} 2n\\ 2l\end{pmatrix}^{-1} \begin{pmatrix} n\\ l\end{pmatrix}.
\end{equation}
Importantly, Eq.~\ref{eq:ovlp_Matchgate} can be efficiently solved using the matrix Pfaffian, with a scaling of $\mathcal{O}\left((n-\zeta/2)^4\right)$ by polynomial interpolation~\cite{wan2023matchgate}, where $\zeta$ denotes the number of electrons. As such, the exponential post-processing bottleneck present in the Clifford shadows approach is eliminated through the use of Matchgate shadows.

Inherent noise resilience, analogous to that observed for overlap ratios computed via Clifford shadows, has not yet been established for overlap ratios computed via Matchgate shadows. One sufficient but not necessary condition, (following from the Clifford case), would be $\widetilde{f}_{2l} = \alpha f_{2l}, \;l \in \{0, 1, \dots, n\}$, where $\alpha$ is a proportionality constant. In Sec.~\ref{subsection:noise_resilience} we investigate the noise resilience of overlap ratios computed via Matchgate shadows. While our experimental results do not display a universal proportionality constant between each $\widetilde{f}_{2l}$ and $f_{2l}$, we nevertheless find that the computed ratios are inherently robust to noise, and provide justification for this observation.

\section{Results} \label{Results}

In this section, we present our results, divided into three parts. We first introduce prior work on noise robust classical shadows, and provide an improvement that mitigates state preparation noise in QC-AFQMC. Second, we experimentally test for noise resilience of both overlap amplitudes and ratios of overlap amplitudes, evaluated via the Matchgate shadow technique. Notably, we observe that the ratios are resilient to the effects of noise (as was previously observed for Clifford shadows~\cite{huggins2022unbiasing}). We provide a theoretical explanation for the observed noise resilience and its limitations, which has more subtle origins than in the Clifford case. Third, we apply the QC-AFQMC algorithm to compute i) the dissociation curve of the hydrogen molecule on a superconducting qubit quantum computer; and ii) the ground state of a negatively charged nitrogen-vacancy (NV) center in diamond on a trapped ion quantum computer. 

We use the following notation in this section: we denote overlap amplitudes obtained from noiseless shadows with an ordinary bracket, $\langle\Psi_T|\phi\rangle$, quantities subject to quantum noise with a tilde bracket, $\widetilde{\langle\Psi_T|\phi\rangle}$, and we use a tilde plus subscript $r$ to denote quantities obtained from a robust shadow protocol, $\widetilde{\langle\Psi_T|\phi\rangle}_r$.

\subsection{Robust Matchgate shadow protocol} \label{subsection:robust_shadow}

Refs.~\cite{chen2021robust,koh2022classical} developed a “robust shadow protocol”, showing that when the quantum noise is gate-independent, time-stationary, and Markovian (GTM), and the state preparation of $\rho$ is perfect, a noise-free expectation value can be obtained. The protocol works by replacing the eigenvalue(s) $f$ of $\mathcal{M}$ with modified value(s) $\widetilde{f}$ that compensates for the effect of noise. The robust scheme uses additional shadow-like circuits to estimate the value(s) of $\widetilde{f}$. It can be viewed as passive error mitigation, as the updates are made purely in the classical post-processing of the shadow expectation value. The framework was originally applied to global and local Clifford shadows. Recently, two separate works extended the robust shadow protocol to the Matchgate setting~\cite{zhao2023group, wu2023error}. The latter of these works observed a close connection between the circuit used for the robust Matchgate shadow protocol and prior work on Matchgate randomized benchmarking~\cite{helsen2022matchgate}. We will closely follow this approach, presenting derivations for completeness in App.~\ref{app:robust_shadow}.

The robust Matchgate shadows protocol~\cite{wu2023error} estimates $\{\widetilde{f}_{2l}\}$ using the following steps: i) prepare the all-zero state $|\textbf{0}\rangle$ on a quantum computer; ii) sample $Q \in B(2n)$ and apply the Matchgate circuit $U_Q$ to $|\textbf{0}\rangle$; iii) measure in the computational basis and collect measurement outcomes $|b\rangle$. The expressions used to calculate $\{\widetilde{f}_{2l}\}$ from these measurement outcomes are given in App.~\ref{app:robust_shadow}. The sample complexity is asymptotically equivalent to that used in the noiseless case~\cite{wu2023error}.

The established robust shadow scheme focuses on noise in the shadow unitary ($U_Q$ in Fig.~\ref{fig:workflow}a), and does not mitigate any noise occurring during state preparation ($V_T$ in Fig.~\ref{fig:workflow}a). We extend the method to (partly) account for noise that occurs during the state preparation unitary. In QC-AFQMC, we evaluate the overlap amplitudes by measuring classical shadows of an equal superposition state $\frac{1}{\sqrt{2}}(|\Psi_T\rangle + |\mathbf{0}\rangle)$. This state is prepared via a circuit that first generates a superposition of the Hartree-Fock state with $\ket{\mathbf{0}}$ state, and then applies a unitary $V_T$, such that $V_T|\Psi_{\text{HF}}\rangle = |\Psi_T\rangle$ and $V_T|\mathbf{0}\rangle = |\mathbf{0}\rangle$~\cite{obrien2021VerifiedQPE}. As discussed above, to estimate the values of $ \{\widetilde{f}_{2l}\}$ required in the robust Matchgate protocol, we apply a Matchgate shadow circuit to the initial state $\ket{\mathbf{0}}$. We can partially account for the noise in the state preparation circuit in the following way. We minimally alter the QC-AFQMC circuit in Fig.~\ref{fig:workflow}a, such that it measures $\{\widetilde{f}_{2l}\}$, by removing the initial Hadamard gate. In a noiseless setting, this modified version of $V_T$ would still result in $\rho = \ket{\mathbf{0}}\bra{\mathbf{0}}$, as required. In the noisy setting, we can treat the noise as originating in the Matchgate circuit, which is then mitigated by the robust protocol. While we do not yet have a full mathematical characterization of the noise models that can be mitigated by our enhanced robust Matchgate shadows technique, we demonstrate the efficacy of this approach through numerical simulations and experiments on quantum hardware, presented in the following subsection.

\subsection{Noise resilience of overlap amplitudes and their ratios}

As discussed previously, all of the key quantities used in the AFQMC algorithm, e.g., the local energy, can be expressed as linear combinations of the ratio of overlaps $\frac{\langle\Psi_T|\phi_i\rangle}{\langle\Psi_T|\phi_j\rangle}$. Hence we can view the propagation in imaginary time as being driven by these overlap ratios. We can estimate the individual overlaps from Matchgate shadows. In this section, we experimentally probe the impact of noise on both the overlap amplitudes, and their ratios, by measuring these quantities via Matchgate shadows implemented on the IBM Hanoi superconducting qubit quantum computer. As a test system, we consider the hydrogen molecule in its minimal STO-3G basis, mapped to four qubits via the Jordan-Wigner transform. We use a trial state of the form $\ket{\Psi_T} = \alpha \ket{1100} + \beta \ket{0011}$, which is a linear combination of the Hartree-Fock configuration and double excited state, respectively.

\subsubsection{Evaluating the overlap amplitudes}
We compute the overlaps $\langle\Psi_T|\phi_i\rangle$ between the trial state and 16 randomly sampled Slater determinants, using Eq.~\ref{eq:ovlp_Matchgate}, as a function of the number of Matchgate shadows used. We consider both noisy Matchgate shadows, and those corrected using the robust Matchgate shadow protocol introduced in Sec.~\ref{subsection:robust_shadow}.

In Fig.~\ref{fig:MAE_shadow} (upper panel) we show the mean absolute error (MAE) of the noisy and noise-robust overlaps, with respect to the ideal noiseless value. We observe that the noisy results differ significantly from the true values, and their accuracy is limited by the effects of noise. The two robust shadow approaches (accounting for state preparation error (SP) and not accounting for it) mitigate the impact of hardware noise. In particular, compensating for the effects of state preparation noise results in a much smaller deviation from the ideal noiseless values than the standard robust approach. This observation signifies the importance of considering the state preparation noise when using robust shadows on quantum hardware.

\begin{figure}[hbt!]
    \centering
    \includegraphics[width=0.48\textwidth]{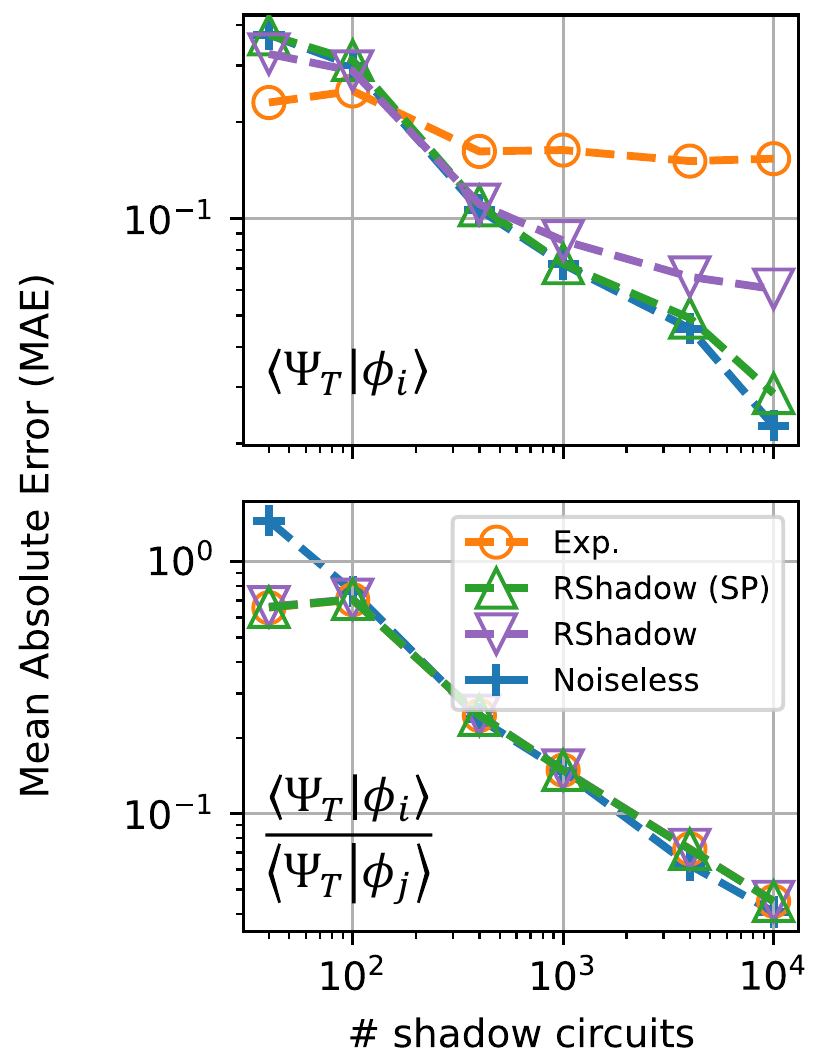}
    \caption{The mean absolute error (MAE) of overlap amplitudes $\langle\Psi_T|\phi_i\rangle$ and overlap ratios $\frac{\langle\Psi_T|\phi_i\rangle}{\langle\Psi_T|\phi_j\rangle}$ w.r.t. the number of Matchgate circuits, varying from 40 to 10,000. A total of 16,000 Matchgate shadow circuits are used in this experiment, each using 1024 measurement shots~\cite{zhou2023performance}. The raw experimental results (“Exp.”) and noiseless simulation (“Noiseless”) are plotted in orange circles and blue crosses, respectively. For the two robust protocols (“RShadow” and “RShadow (SP)”), we allocated another 16,000 Matchgate circuits for each to determine the coefficients $\widetilde{f}_{2l}$, thus doubling the measurement cost.}
    \label{fig:MAE_shadow}
\end{figure}

\subsubsection{Noise resilience in the evaluation of overlap ratios} \label{subsection:noise_resilience}

Using the results presented in the previous section for $\{\langle\Psi_T|\phi_i\rangle\}$, we can compute the 120 possible ratios of overlaps originating from the 16 Slater determinants sampled. In Fig.~\ref{fig:MAE_shadow} (lower panel) we show the MAE of these overlap ratios with respect to the ideal values (and verify in Fig.~\ref{fig:MAE_ratio_each} in App.~\ref{app:quantum_simulations} that this behavior holds for not only the MAE, but the errors associated with each overlap ratio). We observe the following behavior
\begin{equation}
    \frac{\langle\Psi_T|\phi_i\rangle}{\langle\Psi_T|\phi_j\rangle} \approx \frac{\widetilde{\langle\Psi_T|\phi_i\rangle}_r}{\widetilde{\langle\Psi_T|\phi_j\rangle}_r} \approx \frac{\widetilde{\langle\Psi_T|\phi_i\rangle}}{\widetilde{\langle\Psi_T|\phi_j\rangle}}, \label{eq:noise_resilience}
\end{equation}
when the number of shadow circuits exceeds 100. This is an indication of noise resilience in evaluating the overlap ratios, suggesting that the robust shadow protocol may not be required for QC-AFQMC. We provide theoretical justification for these observations below.

The first (approximate) equality in Eq.~\ref{eq:noise_resilience} will not hold for all noise models, as the robust Matchgate shadow protocol is only guaranteed to correct the effects of noise if its assumptions (GTM noise, noiseless state preparation) are satisfied. Nevertheless, for cases where the robust Matchgate protocol is able to correct the effects of noise the equality of the ratios follows directly. In our experiments, we found that the robust Matchgate protocol was able to correct the individual overlaps up to a small residual error attributed to residual state preparation noise. This leads to the approximate equality between the ratios.

The second (approximate) equality shows that the ratio of any two uncorrected overlap amplitudes (which are themselves inaccurate, see Fig.~\ref{fig:MAE_shadow} (upper panel)) yields the same results as ratios of overlaps computed via the robust Matchgate protocol. To understand this result, we first restate Eq.~\ref{eq:ovlp_Matchgate} for computing the overlap in the absence of noise
\begin{equation*}
    \langle\Psi_T|\phi\rangle = 2\sum_{l=0}^n f^{-1}_{2l} \underset{Q \sim B(2n)}{\mathbb{E}} \text{tr} \left[|\phi\rangle\langle \mathbf{0}| \Pi_{2l} \left(U_Q^{\dagger} |\hat{b}\rangle\langle \hat{b}| U_Q\right)\right].
\end{equation*}
We make use of the following Theorem:
\begin{restatable}{theorem}{MainTheorem}\label{Theorem:MainTheorem}
For an $n$ qubit state $\ket{\phi} = \sum_i c_i \ket{i}$ which is a linear combination of computational basis states $\ket{i}$ with fixed Hamming weight $\zeta$, 
\begin{equation}
    \mathcal{P}_{0, \zeta}\big[\Pi_{2l} \left(|\phi\rangle\langle \mathbf{0}|\right)\big] = b_{2l} |\phi\rangle\langle \mathbf{0}|,
\end{equation}
with
\begin{equation}
    b_{2l}=\begin{cases}
	2^{\zeta - n} \binom{n-\zeta}{l-\frac{\zeta}{2}}, & \text{if $\frac{\zeta}{2} \leq l \leq n - \frac{\zeta}{2}$}\\
        0, & \text{otherwise}
    \end{cases}
\end{equation}
such that $\sum_{l=0}^n b_{2l} = 1 $. Here $\mathcal{P}_{0, \zeta}$ denotes the projector onto the subspace spanned by the Hamming weight zero state $\ket{\mathbf{0}}$ and computational basis states of Hamming weight $\zeta$.
\end{restatable}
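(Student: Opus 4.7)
The plan is a direct computation in the Majorana basis, using the identification (established in App.~\ref{app:Matchgates}) of the superoperator $\Pi_{2l}$ with the projector onto the span of Majorana monomials of degree $2l$. By linearity of both sides of the claimed identity in $|\phi\rangle$, it suffices to prove the statement for a single computational basis state $|\phi\rangle = |i\rangle$ of Hamming weight $\zeta$ with support $S = \{k_1<\cdots<k_\zeta\}\subseteq[n]$; the general case $|\phi\rangle = \sum_i c_i |i\rangle$ then follows by summing over $i$.

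First I would expand $|i\rangle\langle\mathbf{0}|$ in the Majorana basis. Starting from $|i\rangle\langle\mathbf{0}| = a_{k_1}^\dagger\cdots a_{k_\zeta}^\dagger\,|\mathbf{0}\rangle\langle\mathbf{0}|$ and using $a_k^\dagger = \tfrac{1}{2}(\gamma_{2k-1}-i\gamma_{2k})$ together with $|\mathbf{0}\rangle\langle\mathbf{0}| = 2^{-n}\prod_j(1-i\gamma_{2j-1}\gamma_{2j})$, I would invoke the identity $(\gamma_{2k-1}-i\gamma_{2k})(1-i\gamma_{2k-1}\gamma_{2k}) = 2(\gamma_{2k-1}-i\gamma_{2k})$. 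This absorbs the $k\in S$ factors of the vacuum projector (the required commutations hold because $\gamma_{2j-1}\gamma_{2j}$ commutes with both $\gamma_{2k-1}$ and $\gamma_{2k}$ whenever $j\neq k$), yielding
\[
|i\rangle\langle\mathbf{0}| = 2^{-n}\prod_{k\in S}(\gamma_{2k-1}-i\gamma_{2k})\prod_{j\notin S}(1 - i\gamma_{2j-1}\gamma_{2j}),
\]
with both products ordered by increasing index. Each factor $(\gamma_{2k-1}-i\gamma_{2k})$ is homogeneous of Majorana degree $1$ (contributing $\zeta$ to the total degree), while each $(1-i\gamma_{2j-1}\gamma_{2j})$ splits into a degree-$0$ piece and a degree-$2$ piece. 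Using the JW identity $-i\gamma_{2j-1}\gamma_{2j}=Z_j$, extracting the degree-$2l$ component then amounts to choosing a subset $T\subseteq[n]\setminus S$ with $|T|=l-\zeta/2$:
\[
\Pi_{2l}(|i\rangle\langle\mathbf{0}|) = 2^{-n}\prod_{k\in S}(\gamma_{2k-1}-i\gamma_{2k})\sum_{\substack{T\subseteq[n]\setminus S\\|T|=l-\zeta/2}}\prod_{j\in T}Z_j,
\]
which vanishes outside $\zeta/2 \le l \le n-\zeta/2$, matching the support of $b_{2l}$.

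Finally I would apply $\mathcal{P}_{0,\zeta}$ from both sides. The creation-operator factor $\prod_{k\in S}(\gamma_{2k-1}-i\gamma_{2k}) = 2^{\zeta}\, a_{k_1}^\dagger\cdots a_{k_\zeta}^\dagger$ sends any $|m\rangle$ to a state of Hamming weight $|m|+\zeta$ (or to zero if $S\cap S_m \neq \emptyset$), so for $\zeta>0$ the only $|m\rangle$ in the range of $\mathcal{P}_{0,\zeta}$ whose image also lies in that range is $|m\rangle=|\mathbf{0}\rangle$. On $|\mathbf{0}\rangle$ every $Z_j$ acts as $+1$ and the creation operators return $|i\rangle$ with the correct $+1$ sign, so each of the $\binom{n-\zeta}{l-\zeta/2}$ surviving terms contributes $2^{\zeta-n}|i\rangle$, giving $\mathcal{P}_{0,\zeta}[\Pi_{2l}(|i\rangle\langle\mathbf{0}|)] = b_{2l}|i\rangle\langle\mathbf{0}|$ and hence the full claim by linearity. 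The normalization $\sum_l b_{2l} = 1$ is then immediate from $\sum_{r=0}^{n-\zeta}\binom{n-\zeta}{r}=2^{n-\zeta}$. The principal technical hurdle is the degree-$2l$ extraction: one must correctly track the anticommutation signs among the Majorana factors and consistently apply $-i\gamma_{2j-1}\gamma_{2j}=Z_j$; once the factored form above is in hand, projecting onto $\mathcal{P}_{0,\zeta}$ reduces to a short counting argument.
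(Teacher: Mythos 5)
Your proof is correct, and it reaches the paper's result by a genuinely different route. The paper's Lemma~\ref{Lemma:MainLemma} expands the projector itself, $\Pi_{2l}\left(\ket{i}\bra{\mathbf{0}}\right) = 2^{-n}\sum_{S}\bra{\mathbf{0}}\gamma_S^{\dagger}\ket{i}\gamma_S$, and then classifies which of the $\binom{2n}{2l}$ Majorana strings survive by splitting each string into ``connected'' pairs (acting as $Z_j$ on the zero qubits) and ``unconnected'' pairs that create the $\zeta$ ones; the resulting explicit Pauli-string sum is evaluated with careful phase bookkeeping and a separate two-electron evaluation (Lemma~\ref{Lemma:TwoElectronEval}), after which the count of $\binom{n-\zeta}{l-\zeta/2}$ placements and the $2^{\zeta-n}$ prefactor emerge. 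You instead expand the operator: using $a_k^{\dagger}=\tfrac{1}{2}(\gamma_{2k-1}-i\gamma_{2k})$, the Gaussian form of the vacuum projector, and the absorption identity $(\gamma_{2k-1}-i\gamma_{2k})(1-i\gamma_{2k-1}\gamma_{2k})=2(\gamma_{2k-1}-i\gamma_{2k})$ (which checks out), you obtain the closed factorization $\ket{i}\bra{\mathbf{0}} = 2^{-n}\prod_{k\in S}(\gamma_{2k-1}-i\gamma_{2k})\prod_{j\notin S}(1-i\gamma_{2j-1}\gamma_{2j})$, from which the degree-$2l$ component is read off immediately (degree $\zeta$ from the creation factors, $2\lvert T\rvert$ from choosing $Z_j$ pieces on $[n]\setminus S$), and the projection $\mathcal{P}_{0,\zeta}$ reduces to noting that only the $\bra{\mathbf{0}}$ column survives. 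The combinatorics is isomorphic --- your $Z_j$ factors are the paper's connected pairs, and your creation factors package the unconnected pairs --- but your factored form makes the sign tracking automatic, since all index sets are disjoint and no reordering inside monomials is needed, so the paper's lengthy explicit sum and its auxiliary lemma become unnecessary; what your route relies on instead, and should state explicitly, is that $\Pi_{2l}$ (defined as $\sum_{\lvert S'\rvert = 2l}\llangle\gamma_{S'}\rvert\cdot\lvert\gamma_{S'}\rrangle$-type projection in Liouville space) coincides with extracting the homogeneous degree-$2l$ part of the Majorana expansion, which follows from the orthogonality of Majorana monomials. Both arguments reduce to a single basis state by linearity and give the vanishing cases $l<\zeta/2$ and $l>n-\zeta/2$ for the same counting reason, and your normalization check $\sum_l b_{2l}=1$ is immediate.
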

\begin{proof}
A full proof is given in App.~\ref{app:eigen-operator}, and we sketch a brief outline here. The operation $\Pi_{2l} \left(|\phi\rangle\langle \mathbf{0}|\right)$ can be re-expressed as $2^{-n} \sum_{j_l} \langle \mathbf{0}|\gamma_{j_l}^{\dagger} |\phi\rangle \gamma_{j_l}$ where $\gamma_{j_l}$ are strings of $2l$ Majorana operators (defined in App.~\ref{app:Matchgates}). The action of the Majorana strings is to create ``1's" in the $\ket{\mathbf{0}}$ state. Only terms with $\zeta$ 1's created contribute to the sum. Projecting the resulting sum into the restricted subspace ensures the `eigen-operator' property. The prefactor is given by counting arguments for the number of Majorana strings that can yield $\zeta$ 1's for a given $l$ value.
\end{proof}
The justification for restricting to the $\{0,\zeta\}$-electron subspace is that $\rho$ is an equal superposition of $|\mathbf{0}\rangle$ and $|\Psi_T\rangle$, and therefore the trace of $\rho$ with operators outside this subspace will necessarily be zero. This will still hold approximately true when $\rho$ is reconstructed from a finite number of classical shadows. This operator only appears in the classical post-processing step and is therefore not affected by noise. Therefore, the noisy overlap amplitude can be expanded as
\begin{equation}\label{Eq:NonRobustMatchgateForRatios}
    \widetilde{\langle\Psi_T|\phi\rangle} \approx 2\left(\sum_{l=0}^n f^{-1}_{2l} b_{2l}\right) \underset{Q \sim B(2n)}{\mathbb{E}} \left[\Big\langle \mathbf{0}\Big|U_Q^{\dagger}\Big| \hat{\tilde{b}}\Big\rangle \Big\langle \hat{\tilde{b}}\Big|U_Q\Big|\phi\Big\rangle\right],
\end{equation}
where in deriving this equation, we have used $\text{tr}\left(A \mathcal{M}^{-1}(B)\right) = \text{tr}\left(\mathcal{M}^{-1}(A) B\right)$, where operators $A, B$ are in the subspace where $\mathcal{M}$ is invertible~\cite{wan2023matchgate}. This expression can be compared to Eq.~\ref{eq:ovlp_Clifford} for Clifford shadows, noting the common feature of the separation of the expression into a sum over the channel index $l$, and the expectation over the shadow unitaries acting on the walker state $|\phi\rangle$. Eq.~\ref{Eq:NonRobustMatchgateForRatios} can also be understood as the way that the Matchgate channel is twirling the noise in each subspace, in a similar manner to that of the Clifford shadows. In order to mitigate the effects of noise with the robust Matchgate shadows approach, we could replace $f^{-1}_{2l}$ with $\widetilde{f}^{-1}_{2l}$
\begin{equation}\label{Eq:RobustMatchgateForRatios}
    \widetilde{\langle\Psi_T|\phi\rangle}_r \approx 2\left(\sum_{l=0}^n \widetilde{f}^{-1}_{2l} b_{2l}\right) \underset{Q \sim B(2n)}{\mathbb{E}} \left[\Big\langle \mathbf{0}\Big|U_Q^{\dagger}\Big| \hat{\tilde{b}}\Big\rangle \Big\langle \hat{\tilde{b}}\Big|U_Q\Big|\phi\Big\rangle\right].
\end{equation}
If the noise satisfies GTM assumptions (and state preparation is noise-free), Eq.~\ref{Eq:RobustMatchgateForRatios} is able to correct the effect of noise that manifests from the deviation of $\Big|\hat{\tilde{b}}\Big\rangle \Big\langle \hat{\tilde{b}}\Big|$ from $\Big|\hat{b}\Big\rangle \Big\langle \hat{b}\Big|$. If these assumptions on the noise are not satisfied, robust Matchgate shadows may not be able to perfectly recover the noiseless value. We can now observe that regardless of whether the robust procedure is used, the terms corresponding to the sum over $l$ in Eq.~\ref{Eq:NonRobustMatchgateForRatios} and Eq.~\ref{Eq:RobustMatchgateForRatios} are both independent of $|\phi\rangle$. As such, when computing the ratios $\frac{\widetilde{\langle\Psi_T|\phi_i\rangle}_r}{\widetilde{\langle\Psi_T|\phi_j\rangle}_r}$ and $\frac{\widetilde{\langle\Psi_T|\phi_i\rangle}}{\widetilde{\langle\Psi_T|\phi_j\rangle}}$, the prefactor will cancel between the numerator and denominator, leaving behind a ratio of estimators $\underset{Q \sim B(2n)}{\mathbb{E}}\Big[\dots\Big]$ that is identical for the robust case and the non-robust case. As such, if the ratio of overlaps evaluated via robust Matchgate shadows is able to approximately recover the noiseless results (as observed in our data), we conclude that the ratio of overlaps evaluated via regular Matchgate shadows is also able to approximately recover the noiseless value. This explains the data shown in Fig.~\ref{fig:MAE_shadow} (lower panel).

We emphasize that the cancellation of the pre-factor found here, i.e., $\sum_{l=0}^n \widetilde{f}^{-1}_{2l} b_{2l}$, should not be mistaken with that of the Clifford case, although they have a very similar form. This prefactor of Matchgate shadows, as a sum, contains the noise information of all $(n+1)$ even subspaces of the Matchgate channel. From the Clifford case, one might naturally think that the cancellation for the Matchgate shadow is due to $\widetilde{f}_{2l} = \alpha f_{2l}$ where $\alpha$ is independent of $l$. We verified that this is not the case by experimentally determining $\widetilde{f}_{2l}$. We observed that $\alpha_l$ could differ by up to 20\% for different $l$. These data are recorded in App.~\ref{app:quantum_simulations}. Therefore, this cancellation should be credited to the structure of $|\phi\rangle\langle \mathbf{0}|$, namely being an eigen-operator of $\Pi_{2l}$ in the restricted subspace, as proven in Theorem~\ref{Theorem:MainTheorem}. If another observable, without this property, were chosen, any observed noise resilience would have an alternative origin. For example, Ref.~\cite{scheurer2023tailored} shows a similar contractive property in evaluating the ratios, but only for noise models that ensure $\langle \textbf{0}|\rho_{\text{noise}}|\phi\rangle$ is zero, where $\rho_{\text{noise}}$ is the noise corrupted part of the density matrix.

\begin{figure}[hbt!]
    \centering
    \includegraphics[width=0.48\textwidth]{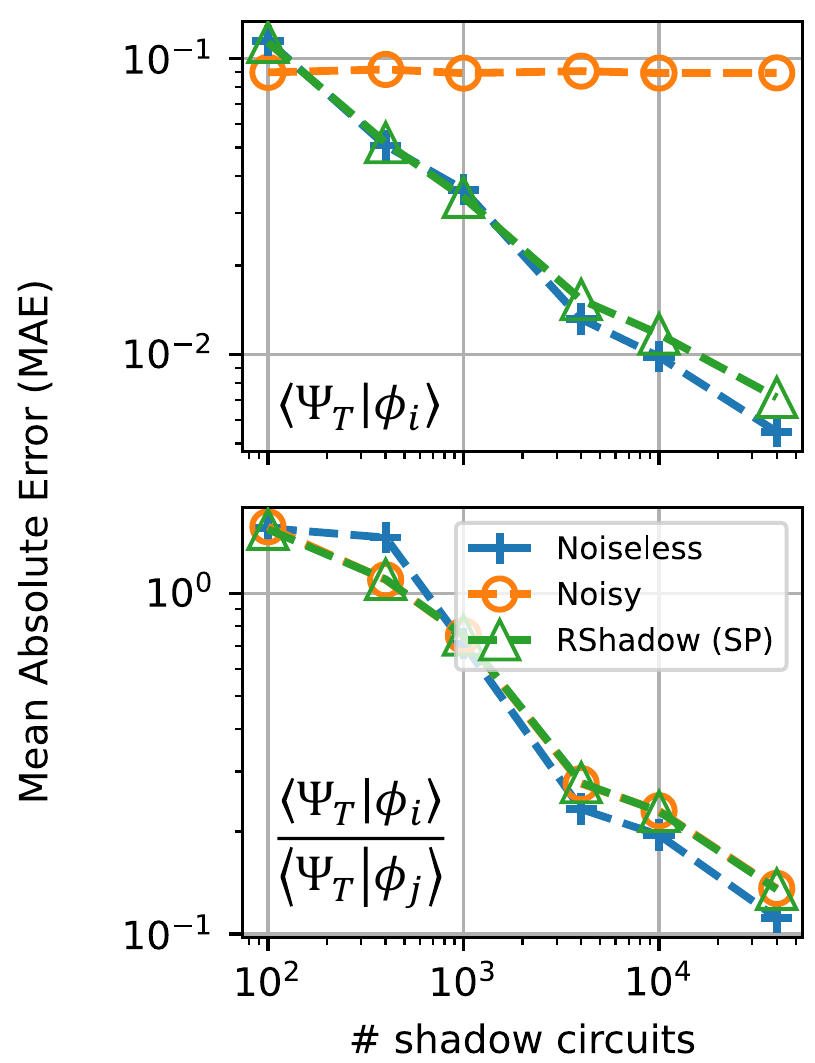}
    \caption{The MAE of estimating overlap amplitudes and overlap ratios w.r.t. the number of Matchgate shadow circuits for the water molecule (8 qubits). These results were obtained through numerical simulations under single-qubit asymmetric Pauli noise, where the Pauli-$X$, $Z$ and $Y$ error rates are set to 1, 3, 2\%, respectively.}
    \label{fig:8Q_convergence}
\end{figure}

As discussed above, robust shadows have been shown to correct for noise in the shadow circuit, assuming noiseless state preparation~\cite{chen2021robust}, or state preparation affected by global depolarizing noise~\cite{huggins2022unbiasing,scheurer2023tailored}. In a realistic setting, the state preparation step may suffer from more complicated noise which may prevent the technique from recovering the noiseless result. We performed numerical simulations to compute overlap ratios for the water molecule in a (4e, 4o) active space (8 qubits). We employed asymmetric Pauli errors as the noise model in this simulation (including in the state preparation circuit), and we refer to App.~\ref{app:quantum_simulations} for more details. The results, shown in Fig.~\ref{fig:8Q_convergence} corroborate the analysis provided above. Specifically, we observe from the upper panel that when computing $\widetilde{\braket{\Psi_T}{\phi_i}}$ the results differ significantly from the noiseless value. In contrast, when computing $\widetilde{\braket{\Psi_T}{\phi_i}}_r$ using the robust Matchgate protocol (with compensation for state preparation error) we are able to almost recover the noise-free value. We attribute the deviation from the noiseless value to the presence of some residual state preparation error, which violates the assumptions of the robust shadow protocol. In contrast, from the lower panel we observe that the ratios of overlaps evaluated via the robust Matchgate shadow protocol are no more accurate than those evaluated using the regular Matchgate shadow protocol. Moreover, both of these values deviate from the noiseless value by a small error due to residual (uncorrectable) state preparation noise. In the case of coherent errors during state preparation, we effectively prepare $\ket{\widetilde{\Psi}_T} = U \ket{\Psi_T}$ for some unitary $U$. As such, while we would still observe that $\frac{\widetilde{\langle\Psi_T|\phi_i\rangle}}{\widetilde{\langle\Psi_T|\phi_j\rangle}} \approx \frac{\widetilde{\langle\Psi_T|\phi_i\rangle}_r}{\widetilde{\langle\Psi_T|\phi_j\rangle}_r}$, neither ratio would likely recover the noiseless value.

Overall, the results of this subsection highlight a limitation of the robust Matchgate protocol, and its use in QC-AFQMC. Nevertheless, these results should be seen as an attractive feature of QC-AFQMC, as they suggest that the algorithm, driven by overlap ratios, is naturally resilient to the impact of many types of noise.

\subsection{Computing ground states with QC-AFQMC}

Having established the inherent noise resilience of the overlap ratios that drive the AFQMC algorithm, we demonstrate that this property contributes to noise resilience in the overall algorithm. We first compute the dissociation curve of the four-qubit hydrogen molecule studied in the previous subsections using the QC-AFQMC algorithm implemented on the IBM Hanoi superconducting qubit quantum processor. We then apply the QC-AFQMC algorithm to compute the ground state of an NV center in diamond, using the Aria trapped ion quantum processor from IonQ. In both settings, we emulate practical applications of QC-AFQMC by assuming a trial state generated by an imperfect VQE calculation. In both cases, we use a tailored UCCSD ansatz, see App.~\ref{app:quantum_simulations}, which was optimized until the energy was approximately $30\sim 80$ mHa higher than the FCI reference. We emphasize that these quantum hardware calculations do not seek to demonstrate any advantage over the classical QMC algorithm, as using an unrestricted Hartree-Fock or a multi-Slater trial state in classical AFQMC would also guarantee accurate ground state energies for these small system sizes.

\subsubsection{Hydrogen molecule dissociation}

In Fig.~\ref{fig:H2_AFQMC} (upper panel) we show the ground state energies obtained from QC-AFQMC at 5 different molecular geometries. If we target an error $\epsilon=10^{-2}$ on any overlap, with $\geq 99.99\%$ confidence, the rigorous sample complexity bounds presented in Eq.~\ref{eq:sample_complexity} imply the need to use $\sim1.1\times 10^5$ shadow circuits (with one measurement shot per circuit). As shown in Fig.~\ref{fig:MAE_shadow}, it was sufficient to use only 16000 circuits for each bond length, suggesting the bounds of Eq.~\ref{eq:sample_complexity} are overly pessimistic for the small system studied here. Given the noise resilience observed in evaluating the overlap ratios, the use of robust shadows was not necessary. In our AFQMC calculations, we used 4800 walkers for all numerical data presented here to ensure a negligible variance, and a timestep $\Delta \tau = 0.005\; \text{H.a.}^{-1}$ resulting in a negligible Trotter error. We chose the initial walker state to be the Hartree-Fock state, for each of the five distances studied here. The calculation was parallelized across 4800 CPU cores (1 core per AFQMC walker), as discussed in more detail below.

In the interest of conserving computational resources, only two of the five data points (solid points) were obtained using the scalable Matchgate shadows approach outlined above. The remaining three data points were obtained using an exponentially scaling approach used in Ref.~\cite{huggins2022unbiasing} and detailed in App.~\ref{app:classical_pp}, that is ultimately more efficient than the scalable Matchgate approach for small system sizes. We verified that the two schemes give the same results for overlap amplitudes. Thus, the two Matchgate-obtained points are used as a complete evaluation of the practicality of the algorithm, while the remaining three points are only presented to confirm the accuracy of the algorithm in the presence of hardware noise.

We observe that all five QC-AFQMC calculations converge to within computational accuracy of the FCI reference value, even without the use of the robust shadow protocol. In Fig.~\ref{fig:H2_AFQMC} (lower panel) we show the variation in energy as a function of imaginary time for the 0.75 \text{\r{A}} geometry. There is little observable difference ($\sim 0.1$ mHa) between the noisy experiment and its noiseless classical emulation.

We further note that even when the state preparation suffers from coherent errors, the QC-AFQMC algorithm is still able to recover accurate ground state energies, providing the effective quantum trial state $|\widetilde{\Psi}_T\rangle$, reconstructed from the noisy shadows, does not appreciably deviate from $|\Psi_T\rangle$. This is a result of the dissipative nature of the imaginary time process emulated by AFQMC, which will drive any initial state towards the ground state. This further enhances the inherent noise resilience of the QC-AFQMC algorithm.

\begin{figure}[hbt!]
    \centering
    \includegraphics[width=0.48\textwidth]{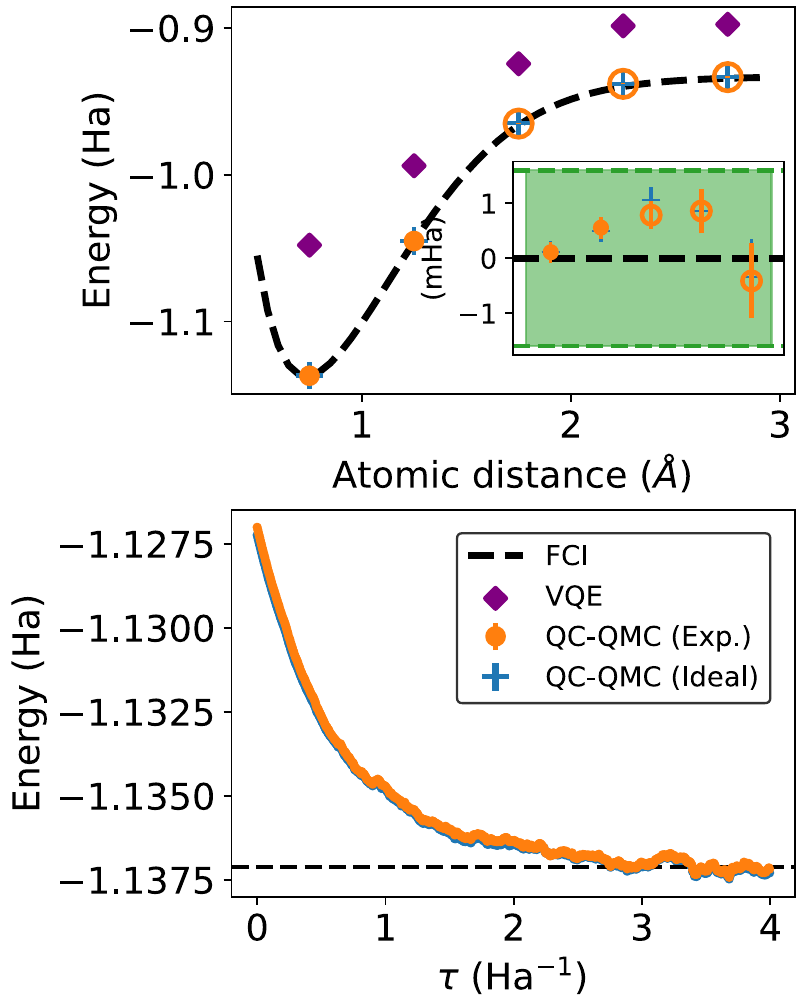}
    \caption{Estimations of the ground state energy of the hydrogen molecule at 5 different bond distances (upper). The QC-AFQMC calculation run on the IBM Hanoi quantum computer (solid and hollow orange circles labeled “exp.”) converges within computational accuracy for all distances (see inset), with a difference of $\sim 10^{-4}$ H.a. from its simulated noiseless counterpart (blue crosses). As discussed in the main text, filled circles denote results obtained with the scalable Matchgate shadows approach, while the empty circles denote results obtained with the non-scalable approach outlined in App.~\ref{app:classical_pp}. The quantum trial state is obtained from a noisy VQE simulation (purple diamond). An imaginary time evolution at 0.75 \text{\r{A}} is plotted in the lower panel, where the auxiliary fields sampled at each timestep are synchronized between the raw noisy experiment and noiseless reference.}
    \label{fig:H2_AFQMC}
\end{figure}

We emphasize that even for the hydrogen molecule in a minimal basis set, executing the QC-AFQMC algorithm using Matchgate shadows requires a substantial amount of resources, with classical computation dominating over the quantum subroutine. It was necessary to perfectly parallelize the classical AFQMC propagation, with one walker per CPU core\footnote{Intel Xeon Platinum Processor.} (this is only possible when not using population control\footnote{In QMC, population control refers to a technique to retain sampling efficiency by adjusting the walkers so that the algorithm avoids spending a disproportionate amount of time keeping track of walkers that contribute little to the energy estimate~\cite{motta2018ab}. It makes the algorithm more robust against sampling noise but necessitates communication between all the walkers and therefore renders the parallelization less effective.}). For each walker, it took approximately 1 minute to post-process the shadows from 16000 circuits in each timestep. As each step of the AFQMC algorithm must be performed sequentially, it could take up to hours or even days to complete the full evolution, see App.~\ref{app:quantum_simulations} for details. This total runtime partly comes from the large number of shadows, which plays the role of a pre-factor and could be further improved by implementing additional parallelization among the shadows, for example, processing the shadows corresponding to a single walker in parallel across a number of cores, rather than sequentially on the same core. The total runtime also stems from a scaling as high as $\mathcal{O}(n^{8})$ to evaluate the local energy of each walker, which is discussed in detail in App.~\ref{app:local_energy}. This high cost presents a practical concern for scaling this algorithm to larger systems of practical interest.

\subsubsection{Nitrogen-vacancy center ground state}

We applied the QC-AFQMC algorithm to study the electronic states of a point defect in a solid. We chose the NV center in diamond, which is widely considered a promising candidate for quantum sensing and has recently been applied to imaging high-pressure phase transitions~\cite{hsieh2019imaging} and superconducting systems~\cite{bhattacharyya2023imaging}. To obtain multireference states we employed a quantum defect embedding theory (QDET)~\cite{sheng2022green}, which allows us to derive an effective Hamiltonian $H_{\text{eff}}$, by defining an active space, see App.~\ref{app:qdet} for details. This effective Hamiltonian is used as the starting point of the QC-AFQMC calculation.

Although it has been shown~\cite{huang2023quantum} that a rather large active space is needed to fully converge the computed neutral excitation energies, here we only chose a minimum model of (4e, 3o) to carry out the QC-AFQMC calculations, as a proof of principle. The ground state has a $^3 A_2$ irrep due to the $C_{3v}$ symmetry of the defect and its wavefunction can be written as
\begin{equation}
    \big| ^3 A_2\big\rangle = \frac{1}{\sqrt{2}}\left(|a_1 \overline{a_1} e_x \overline{e_y}\rangle + |a_1 \overline{a_1} \overline{e_x} e_y\rangle\right),
\end{equation}
where $a_1, e$ denotes the irrep of the single-particle orbital and the bar symbol represents the spin-down channel. In Ref.~\cite{huang2022simulating}, a subset of the current authors investigated this system using VQE and found that error mitigation techniques were necessary to obtain an accurate ground state energy. Here, we use the tailored UCCSD ansatz in App.~\ref{app:quantum_simulations} as the quantum trial state for our QC-AFQMC calculations.

In QC-AFQMC, we used 4000 shadow circuits (with 100 shots each), 4800 walkers and an imaginary timestep of $\Delta \tau = 0.4\; \text{H.a.}^{-1}$. As can be seen from Fig.~\ref{fig:nv_center_afqmc}, the results obtained on quantum hardware (orange curve) agree with the noiseless reference (blue curve), which is within our expectation, given the noise resilience discussed in Sec.~\ref{subsection:noise_resilience}. Both curves converge to the classical reference FCI limit, which has been renormalized to zero.

\begin{figure}[hbt!]
    \centering
    \includegraphics[width=0.48\textwidth]{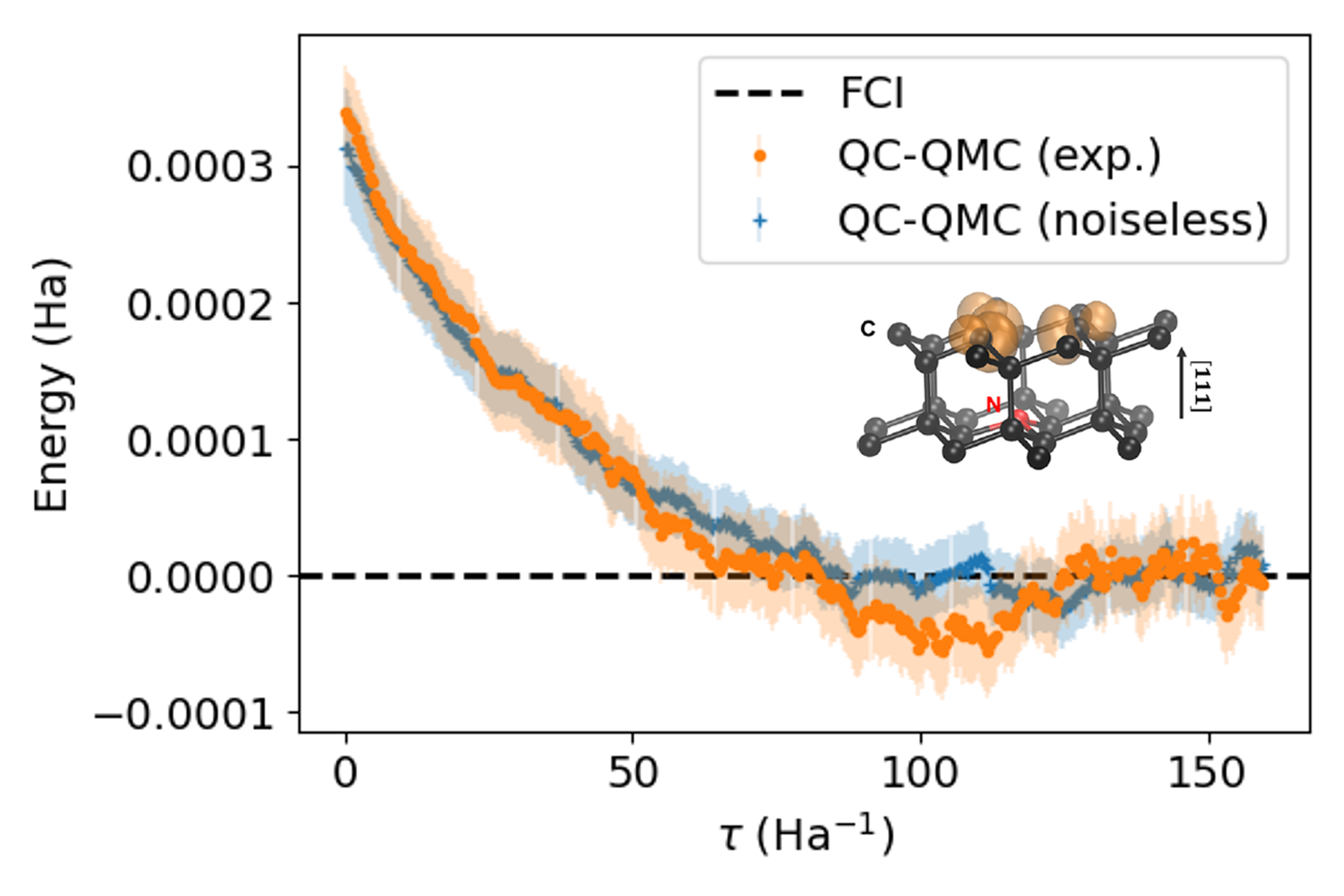}
    \caption{The QC-AFQMC calculation of an NV center in diamond using a noiseless quantum simulator (blue crosses), and the IonQ Aria quantum computer (orange circles). Converged results of both have an error on the order of 0.1 mHa compared to classical reference. The inset shows an atomistic model of the defect center.}
    \label{fig:nv_center_afqmc}
\end{figure}

The computation was again fully parallelized, with one walker per physical CPU core (4800 physical cores total). The total runtime required for 400 timesteps was approximately 1.5 hours.

\section{Discussion} \label{Discussions}

\begin{table*}[t]
\caption{\label{tab:Scaling_comparison} Comparison of different measurement schemes for QC-AFQMC.}
\renewcommand{\arraystretch}{1.8}
\setlength\tabcolsep{0pt}
\begin{tabular*}{\linewidth}{@{\extracolsep{\fill}}c|c|c|c}
\hline
\hline
Method for computing $\langle\Psi_T|\phi\rangle$ & Vacuum reference~\cite{xu2023quantum} & Clifford shadows~\cite{huggins2022unbiasing} & Matchgate shadows~\cite{wan2023matchgate} \\
\hline
Only offline QC access? & \tikzxmark & \tikzcmark & \tikzcmark \\
\hline
Circuit sampling complexity\footnotemark[1] & $\mathcal{O}\left(\frac{Mn^4}{\epsilon^2}\right)$ & $\mathcal{O}\left(\frac{\text{log}\left(Mn^4/\delta\right)}{\epsilon^2}\right)$ &$\mathcal{O}\left(\frac{\text{log}\left(Mn^4/\delta\right)}{\epsilon^2}\sqrt{n}\text{log}n\right)$ \\
\hline
Circuit depth complexity\footnotemark[2] & \multicolumn{3}{c}{$\mathcal{O}(N_T + n)$} \\
\hline
Noise-resilient overlap ratios & \tikzxmark & \tikzcmark & \tikzcmark \\
\hline
$\{\langle\Psi_T|\phi\rangle \}$ post-processing complexity & $\mathcal{O}(M)$ & $\mathcal{O}\left(\frac{\text{log}\left(Mn^4/\delta\right)}{\epsilon^2} M\text{exp}(n)\right)$ & $\mathcal{O}\left(\frac{\text{log}\left(Mn^4/\delta\right)}{\epsilon^2} Mn^{4.5} \text{log}n\right)$\\
\hline
$\{ \langle\Psi_T|H|\phi\rangle \}$ post-processing complexity\footnotemark[3] & $\mathcal{O}(Mn^4)$ & $\mathcal{O}\left(\frac{\text{log}\left(Mn^4/\delta\right)}{\epsilon^2} M\text{exp}(n)\right)$ & $\mathcal{O}\left(\frac{\text{log}\left(Mn^4/\delta\right)}{\epsilon^2} Mn^{8.5} \text{log}n\right)$\\
\hline
\hline
\end{tabular*}
\footnotetext[1]{$M$ in the scaling should be interpreted as the total number of walker states involved in the algorithm. The logarithmic factor entering the complexities for the shadow-based methods arises from the use of a median-of-means estimator, and is not present in the heuristic mean estimator used in this work.}
\footnotetext[2]{$\mathcal{O}(N_T)$ represents the quantum trial state preparation circuit depth complexity. And the $|\phi\rangle$ preparation, random Clifford, and Matchgate circuit for the three schemes all scale as $\mathcal{O}(n)$ in depth.}
\footnotetext[3]{The local energy numerator can be computed by decomposing it into a linear combination of $\mathcal{O}(n^4)$ overlap amplitudes, due to the complexity of the Hamiltonian. The number of shadow samples required also contributes to the post-processing cost.}
\end{table*}

As we discussed in Sec.~\ref{introduction}, variational algorithms like VQE have been the methods of choice for quantum chemistry problems on near-term quantum computers. The QC-AFQMC scheme discussed here is complementary to VQE. Specifically, QC-AFQMC could be viewed as an error mitigation technique for VQE, where both the inaccuracy of the ansatz used and the noise effects would in principle be corrected by imaginary time evolution. In turn, the quantum trial state used in the QC-AFQMC algorithm could come from VQE calculations (in a smaller active space, or for a reduced problem size). However, one notable difference from the VQE ansatz is that the quantum trial preparation circuit needs to satisfy $V_T|\mathbf{0}\rangle = |\mathbf{0}\rangle$, if following the approach used in this work. This additional constraint prevents the use of some popular ansatz circuits in VQE, such as qubit coupled cluster~\cite{ryabinkin2018qubit}, and circuit simplification techniques~\cite{hempel2018quantum}. For some ansatz circuits it may be possible to circumvent this limitation by introducing another ancilla qubit, and controlling the implementation of the ansatz (which can often be done by controlling select gates, rather than every gate), following the procedure in App.~E of Ref.~\cite{wu2023error}.

The possibility of achieving exponential quantum advantage in solving quantum chemistry problems remains controversial~\cite{lee2023evaluating}. This is arguably not the direct aim of QC-AFQMC, given that phaseless-AFQMC is already a polynomially scaling method. Instead, the motivation of ph-QC-AFQMC is that the trial state used may provide energies with smaller biases than the purely classical algorithm. A full comparison requires an improved understanding of the advantages offered by quantum trial states over classical trial states and experimenting with larger systems, which we leave for future investigations. As a proxy for this, we could compare the cost of classical methods using state-of-the-art techniques~\cite{pham2023scalable, chen2023hybrid, weber2023design, shee2023potentially}, such as Multi-Slater trial states. Two popular implementations using Wick's theorem have the following scaling for evaluating the local energy, $\mathcal{O}(MXn^3 + MXN_c)$~\cite{mahajan2021taming} and $\mathcal{O}(MXn^4 + MN_c)$~\cite{mahajan2020efficient}, where $M, X, N_c$ are the number of walkers, Cholesky vectors, and Slater determinants in the trial, respectively. This can be compared against the complexities of the quantum algorithm, which is summarized in Table.~\ref{tab:Scaling_comparison} for the three different proposed approaches for measuring the overlap amplitudes. We see that the classical algorithms are more efficient in $n$ than the classical shadows-based quantum algorithms, which scale as $\mathcal{O}(n^{8.5}\text{log}^2 n)$. Following this initial posting of this paper, Ref.~\cite{jiang2024unbiasing} proposed using algorithmic differentiation to compute the local quantities, which may improve the high polynomial scaling of classical post-processing. Integrating this approach with the quantum algorithm is an important direction for future research. Faced with the steep overhead of classical post-processing, the quantum trial state would need to lead to a much smaller bias than a classical multi-Slater trial state with a large number of determinants to be practically advantageous. It is currently an open question as to what kind of quantum trial state is best suited for QC-AFQMC. The only criterion adopted so far is the trial state energy $\langle\Psi_T|H|\Psi_T\rangle$, but there might exist other, better-suited criteria.

While Matchgate classical shadows are formally efficient, they introduce a large post-processing cost. This presents a practical challenge for scaling up the QC-AFQMC algorithm to larger system sizes. To highlight this, in Fig.~\ref{fig:resource_estimation} we present the classical runtime for post-processing a single timestep of the QC-AFQMC algorithm, assuming parallelization with 1 CPU core per walker. Numerical results were obtained for hydrogen and then extrapolated to larger system sizes using the scaling of the QC-AFQMC algorithm. The results show that our implementation of the algorithm is only realistic up to eight spin-orbitals (water). Within each CPU, computation associated with postprocessing the results of each of the shadow circuits can be further parallelized. While not investigated in this paper, parallelization between shadow circuits can potentially speed up the classical runtime. For example, using one million CPU cores~\cite{posey2018hpc} would drive the classical post-processing of benzene toward a realistic regime, of around 6 hours per timestep. Nevertheless, for larger systems such as FeMoco, further optimizations in the algorithm and/or classical post-processing are still required to become practical. Compared to the classical shadows approach, the Hadamard-test based approach of Ref.~\cite{xu2023quantum} reduces the cost of classical post-processing. However, this comes at a price of an increased number of quantum circuit repetitions (scaling linearly with the number of walkers), leading to similar measurement bottlenecks as observed for VQE methods~\cite{gonthier2022MeasurementsVQE}. In addition, this approach requires iterative communication between quantum and classical processors, which may be subject to latencies. As such, despite the higher classical post-processing costs of the classical shadow based approach, we currently view it as the more promising approach to QC-AFQMC.

\begin{figure}[hbt!]
    \centering
    \includegraphics[width=0.48\textwidth]{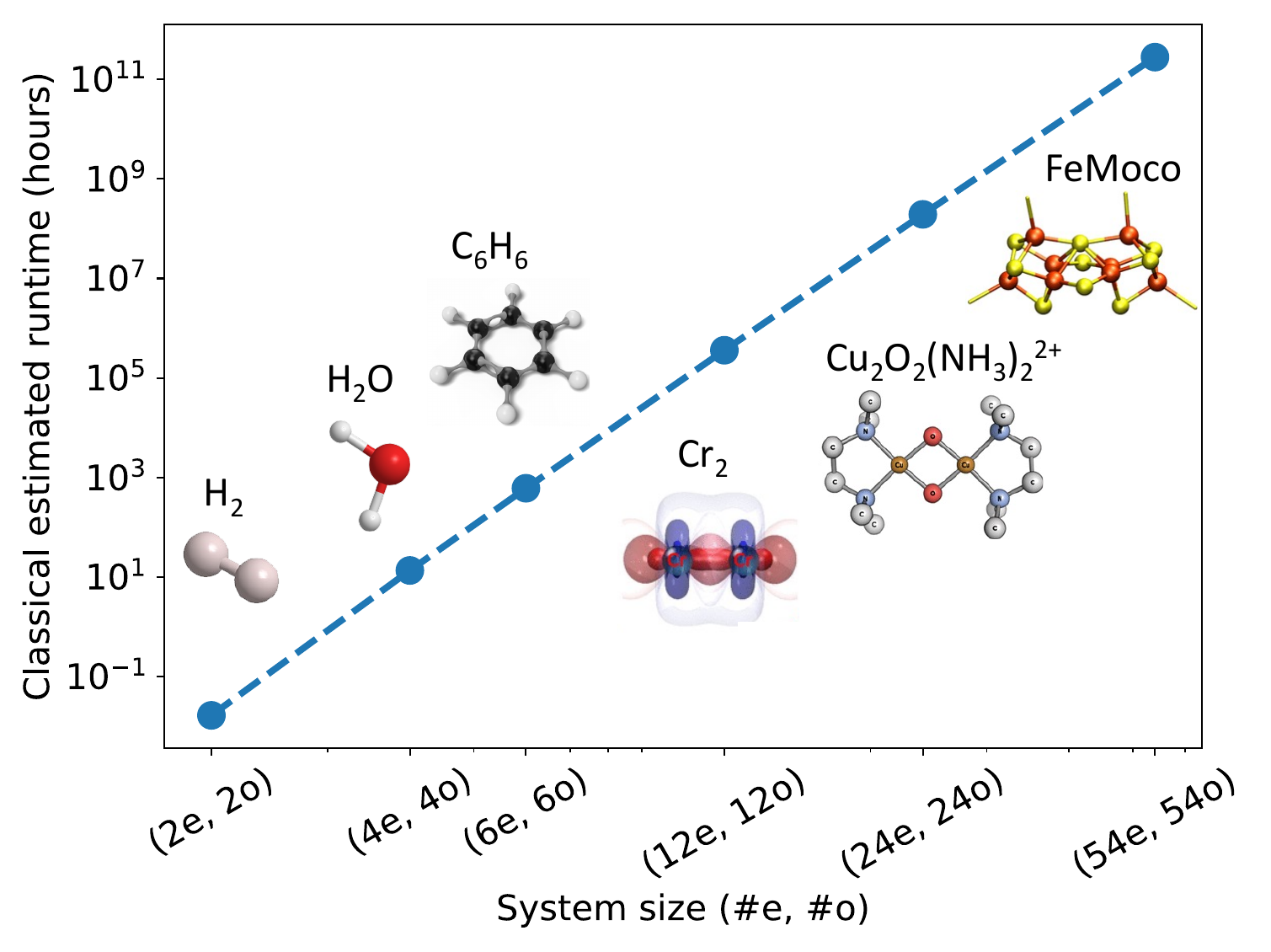}
    \caption{Prediction of the runtime estimation of classical post-processing for a single timestep in the QC-AFQMC algorithm performed on a single CPU core. These results were extrapolated from the runtime of post-processing results from hydrogen, based on a scaling of classical post-processing of $\mathcal{O}(n^{8.5}\log^2n)$. We make the optimistic assumptions that the error thresholds $\epsilon, \delta$ in classical shadows do not need to decrease as the system size increases. The scaling is dominated by the local energy estimation. The system sizes (plotted on the $x$ axis with a $\log$ scale) are quantified using active spaces, taken from the literature~\cite{mcardle2020quantum, ryabinkin2018qubit, goings2023molecular, anderson2020efficient, reiher2017elucidating}.}
    \label{fig:resource_estimation}
\end{figure}

Beyond consideration of computational efficiency, it is important to consider how noise affects the final energy estimate from QC-AFQMC. We have investigated above the effects of hardware noise on the evaluation of overlaps $\braket{\Psi_T}{\phi_i}$ and their ratios $\frac{\braket{\Psi_T}{\phi_i}}{\braket{\Psi_T}{\phi_j}}$, and we have found natural noise resilience for the latter. While this noise resilience will not persist in all settings (e.g. coherent state preparation noise, or noise that violates the GTM assumptions), our results show that there are no benefits to be gained from the robust classical shadows schemes. Since the AFQMC algorithm is driven by overlap ratios, it is important to understand how AFQMC performs with unbiased estimates of overlap ratios, subject to shot noise from finite measurement statistics. This question has recently been addressed by Ref.~\cite{kiser2023classical}, which observed that the error in the AFQMC energy estimate can be upper bounded by the 2-norm of the Hamiltonian matrix elements. Hence, we can conclude that if the noise obeys the conditions specified above, QC-AFQMC can still achieve accurate results, despite the presence of noise.

Finally, it has been acknowledged~\cite{huggins2022unbiasing} and highlighted~\cite{mazzola2022exponential} that the overlap amplitudes $\langle\Psi_T|\phi\rangle$ are expected to decay exponentially with system size, as discussed in Sec.~\ref{AFQMC}. In the worst case scenario, this decay implies an exponentially large number of measurements may still be required to control the uncertainty in estimating these overlaps. Establishing the practical viability of QC-AFQMC will require developing an improved understanding of these challenges, and their possible solutions.

\section{Conclusions} \label{Conclusions}

In this work, we carried out the first end-to-end experimental evaluation of the recently proposed Matchgate shadows~\cite{wan2023matchgate} powered QC-AFQMC algorithm for quantum chemistry~\cite{huggins2022unbiasing}. We observed that the algorithm is inherently noise robust, which we find to be a consequence of the natural noise resilience of evaluating overlap ratios via Matchgate shadows. We provided a theoretical explanation for this observed phenomenon, which also elucidates limitations in recently proposed robust Matchgate shadow protocols. We also developed improvements to those protocols which can mitigate state preparation noise, and may be of independent interest. Nevertheless, despite the tantalizing noise resilience of the algorithm, our optimized practical implementations have uncovered a number of challenges to the scalability of the algorithm. Most prominently, while the Matchgate shadows protocol is asymptotically efficient, its high degree polynomial scaling for evaluating the local energy necessitates significant parallel compute resources for classical post-processing. We estimate that for larger molecules, we would require significant amounts of classical post-processing, as shown in Fig.~\ref{fig:resource_estimation}. As such, future work should focus on developing new methods for efficiently computing the local energy in QC-AFQMC, with lower post-processing costs. Many open questions remain about the best trial states to use for QC-AFQMC, and how to ensure non-vanishing overlaps between the trial state and walker states. The merits of QC-AFQMC found in this work, together with these open research questions, motivate the importance of further study of this algorithm applied to larger system sizes, on real quantum devices.

\section{Acknowledgement}
We thank Senrui Chen, Steve Flammia, Liang Jiang, Andrew Zhao, Kianna Wan, William Huggins, Siyuan Chen, Yu Jin, Guan Wang, and Ji Liu for fruitful discussions. We thank Eric Kessler and Alexander Dalzell for comments on this manuscript. This work was in part supported by the Next Generation Quantum Science and Engineering (Q-NEXT) center that develops quantum science and engineering technologies. Q-NEXT is supported by the U.S. Department of Energy, Office of Science, National Quantum Information Science Research Centers. This research used resources of the Oak Ridge Leadership Computing Facility at the Oak Ridge National Laboratory, which is supported by the Office of Science of the U.S. Department of Energy under Contract No. DE-AC05-00OR22725. We thank the Amazon Braket team for facilitating the access to IonQ quantum computers through Amazon Web Service (AWS), and the access to HPC services on AWS. We acknowledge the use of IBM Quantum services for this work and to advanced services provided by the IBM Quantum Researchers Program. The views expressed are those of the authors and do not reflect the official policy or position of IBM or the IBM Quantum team.

\appendix

\section{Auxiliary-field quantum Monte Carlo} \label{app:AFQMC}

The auxiliary-field quantum Monte Carlo (AFQMC) algorithm is a stochastic implementation of the imaginary time evolution (ITE) process:
\begin{equation}
    e^{-\tau(H - E_0)}|\Psi_I\rangle = \left[e^{-\Delta\tau(H - E_0)}\right]^N |\Psi_I\rangle,\;\; \Delta\tau = \frac{\tau}{N}, \label{eq:ITE}
\end{equation}
where $E_0$ is some approximated ground state energy typically from mean-field calculations. ITE projects out the excited state components in the initial state $|\Psi_I\rangle$, and the true ground state $|\Psi_g\rangle$ is ultimately obtained as long as $|\Psi_I\rangle$ is not orthogonal to it. In general, it is often as difficult to realize the ITE in Eq.~\ref{eq:ITE} as solving the time-independent Schr\"odinger equation. To remedy this challenge, the Hubbard-Stratonovich transformation~\cite{hubbard1959calculation, stratonovich1957method} can be employed, where the two-body interaction in the following form is recast into one-body coupled to an auxiliary field:
\begin{equation}
    e^{\frac{\Delta \tau}{2} v^2_{\gamma}} = \int \frac{d x_{\gamma}}{\sqrt{2\pi}} e^{-\frac{x_\gamma^2}{2}} e^{\sqrt{\Delta \tau} x_{\gamma} v_{\gamma}},
\end{equation}
where $v_{\gamma}$ is a matrix that represents a one-body interaction, and $x_{\gamma}$ is the auxiliary field. To achieve that, the Hamiltonian, typically written in second quantized form, has to be rewritten into a Cholesky decomposed format
\begin{equation}
\begin{split}
    H & = H_0 + \sum_{i,j} h_{ij} a_i^{\dagger} a_j + \frac{1}{2}\sum_{i,j,k,l} V_{ijkl} a_i^{\dagger} a_j^{\dagger} a_l a_k \\
    & = H_0 + v_0 - \frac{1}{2} \sum_{\gamma} v_{\gamma}^2, \label{eq:afqmc_hamiltonian}
\end{split}
\end{equation}
where $H_0$ is a constant, $h, V$ are the one- and two-electron integrals, $v_0$ is the modified one-body interaction, $v_{\gamma} = i\mathcal{L}_{\gamma}$, and $\mathcal{L}_{\gamma}$ is the Cholesky vector of the two-body interactions satisfying
\begin{equation}
    \mathcal{L}_{\gamma} = \sum_{pq} L^{\gamma}_{pq} a^{\dagger}_p a_q,\;\;V_{ijkl} = \sum_{\gamma} L^{\gamma}_{ik} L^{\gamma *}_{lj}.
\end{equation}
Note that we've assumed a mean-field subtraction~\cite{motta2018ab} in the above Hamiltonian. With it, and the second-order Trotter formula~\cite{trotter1959product, suzuki1976relationship}: $e^{A+B} \approx e^{A/2} e^{B} e^{A/2}$, the short-time propagator can be approximated as
\begin{equation}
\begin{split}
    & e^{-\Delta\tau(H - E_0)}\\
    & \approx e^{-\Delta\tau(H_0 - E_0)} e^{-\frac{\Delta\tau}{2} v_0} \prod_{\gamma} e^{\frac{\Delta\tau}{2} v^2_{\gamma}} e^{-\frac{\Delta\tau}{2} v_0}\\
    & = \int d\textbf{x} p(\textbf{x}) B(\textbf{x}) + O(\Delta\tau^2), \label{eq:propagator}
\end{split}
\end{equation}
where we've defined
\begin{equation}
    B(\textbf{x}) = e^{-\frac{\Delta\tau}{2} v_0} \prod_{\gamma} e^{\sqrt{\Delta\tau} x_{\gamma} v_{\gamma}} e^{-\frac{\Delta\tau}{2} v_0}.
\end{equation}
The constant prefactor is usually left out and $p(\textbf{x})$ is the multi-variable standard normal distribution. After this transformation, we have only one-body terms left, and the original interacting problem is now mapped onto an ensemble of non-interacting problems coupled to a set of auxiliary fields. The probability distribution $p(\textbf{x})$ in the above expression is realized by an ensemble of walkers $|\phi_l\rangle$, which are evolved as
\begin{equation}
    |\phi^{(i+1)}\rangle = B(\textbf{x}) |\phi^{(i)}\rangle, \label{eq:walkers_evolution}
\end{equation}
where the superscript represents the timestep. The walkers in AFQMC are chosen as nonorthogonal Slater determinants and stay in the manifold of Slater determinants $\mathcal{S}$ during evolution, due to the Thouless theorem~\cite{thouless1960stability, thouless1961vibrational}. The ground state is then estimated by a mixed energy estimator which is exact and can lead to considerable simplifications in practice
\begin{equation}
    E = \frac{\langle \Psi_T|H e^{-N\Delta\tau(H - E_0)}|\phi^{(0)}\rangle}{\langle \Psi_T|e^{-N\Delta\tau(H - E_0)}|\phi^{(0)}\rangle}, \label{eq:mixed_estimator}
\end{equation}
where $|\Psi_T\rangle, |\phi^{(0)}\rangle$ are the trial state and initial Slater determinant, respectively.

So far we haven't imposed any constraints on the evolution of walkers, which is known as free-projection AFQMC. The only potential systematic error comes from the Trotterization error. However, free-projection AFQMC is prone to large fluctuations in the estimated energy because of an asymptotic instability in $\tau$, also known as the phase problem. The phase problem manifests as the phase $\theta_l$ of walkers evolves into the whole $[0, 2\pi)$ range during the random walk (e.g., see Fig. 3 in Ref.~\cite{zhang201315} for visualization), and results in the denominator in Eq.~\ref{eq:gs_estimator}, i.e., $\sum_l w_l e^{i\theta_l}$ approaching 0 exponentially quickly with $\tau$. This problem could be viewed as a generalization of the sign problem, see Refs.~\cite{motta2018ab, zhang201315} for a detailed discussion.

Importance sampling is a variance-reduction technique widely used in QMC methods. It's also employed to control the phase problem in AFQMC. To do that, we first note that the Eq.~\ref{eq:propagator} still holds up to a complex-valued shift $\overline{\textbf{x}}$. Therefore it can be modified as:
\begin{equation}
    p(\textbf{x}) \to p(\textbf{x} - \overline{\textbf{x}}),\;\;B(\textbf{x}) \to B(\textbf{x} - \overline{\textbf{x}}).
\end{equation}
To find the best $\overline{\textbf{x}}$, we first expand Eq.~\ref{eq:mixed_estimator} as
\begin{equation}
\begin{split}
    E & \approx \frac{\int\prod_{k=0}^{N-1}d\textbf{x}_k p(\textbf{x}_k - \overline{\textbf{x}}_k) \langle\Psi_T|H|\phi^{(N)}\rangle}{\int\prod_{k=0}^{N-1}d\textbf{x}_k p(\textbf{x}_k - \overline{\textbf{x}}_k) \langle\Psi_T|\phi^{(N)}\rangle}\\
    & = \frac{\int\prod_{k=0}^{N-1}d\textbf{x}_k p(\textbf{x}_k) I(\textbf{x}_k, \overline{\textbf{x}}_k, \phi^{(k)}) \frac{\langle\Psi_T|H|\phi^{(N)}\rangle}{\langle\Psi_T|\phi^{(N)}\rangle}}{\int\prod_{k=0}^{N-1}d\textbf{x}_k p(\textbf{x}_k) I(\textbf{x}_k, \overline{\textbf{x}}_k, \phi^{(k)})},
\end{split}
\end{equation}
where we've defined 
\begin{equation}
    |\phi^{(N)}\rangle = B(\textbf{x}_{N-1} - \overline{\textbf{x}}_{N-1})\dots B(\textbf{x}_0 - \overline{\textbf{x}}_0)|\phi^{(0)}\rangle,
\end{equation}
and an importance function
\begin{equation}
    I(\textbf{x}, \overline{\textbf{x}}, \phi) = \frac{\langle \Psi_T|B(\textbf{x} - \overline{\textbf{x}})|\phi\rangle}{\langle \Psi_T|\phi\rangle} e^{\textbf{x}\cdot\overline{\textbf{x}} - \frac{\overline{\textbf{x}}^2}{2}}.
\end{equation}
By choosing
\begin{equation}
    \overline{x}_{\gamma} = -\sqrt{\Delta\tau} \langle v_{\gamma}\rangle,\;\; \langle v_{\gamma}\rangle = \frac{\langle\Psi_T|v_{\gamma}|\phi\rangle}{\langle\Psi_T|\phi\rangle},
\end{equation}
fluctuations in the importance function to first order in $\sqrt{\Delta \tau}$ is cancelled~\cite{motta2018ab}, leading to a more stable random walk. The dynamic shift, usually referred to as a force bias, modifies the sampling of the auxiliary fields by shifting the center of the Gaussian distribution $p(\textbf{x})$ according to the overlap $\langle\Psi_T|\phi\rangle$. The importance function can therefore be further approximated as $I(\textbf{x}, \overline{\textbf{x}}, \phi) \approx \text{exp}\left[-\Delta\tau(E^{\text{loc}} - E_0)\right]$, known as the local energy formalism. Introducing the force bias into the importance sampling would solve the instability if $B(\textbf{x})$ were real, which could be true in special cases. For a general phase problem, however, it doesn't lead to complete control of the phase problem.

To further control the phase problem, the local energy in the importance function has to be replaced by its real part, leading to $\theta_l=0$ and $w_l > 0$ in Eq.~\ref{eq:gs_estimator}. A phaseless approximation is also adopted as
\begin{equation}
\begin{split}
    I(\textbf{x}, \overline{\textbf{x}}, \phi_l) & \approx \text{exp}\left[-\Delta\tau(\Re E^{\text{loc}}_l - E_0)\right]\\
    & \times \max\left(0, \cos\left[\arg\left(\frac{\langle \Psi_T|B(\textbf{x} - \overline{\textbf{x}})|\phi_l\rangle}{\langle \Psi_T|\phi_l\rangle}\right)\right]\right), \label{eq:phaseless}
\end{split}
\end{equation}
where abrupt phase changes during the random walk are now forbidden. The weight is updated as: $w_l^{(k)} \gets w_l^{(k-1)}\times I(\textbf{x}, \overline{\textbf{x}}, \phi_l^{(k-1)})$. Combining these techniques finally resolves the phase problem, and the whole algorithm is usually referred to as the phaseless AFQMC (ph-AFQMC). Note that the phaseless approximation also introduces a bias to the mixed energy estimator, which we seek to systematically improve by using quantum computation in QC-AFQMC.

The whole process of the AFQMC algorithm is summarized in Alg.~\ref{alg:afqmc}, following the presentation in Ref.~\cite{xu2023quantum}.
\begin{algorithm}[H]
\caption{Auxiliary-field quantum Monte Carlo}\label{alg:afqmc}
\begin{algorithmic}[1]
\State Input $H, E_0, |\Psi_T\rangle, |\Psi_I\rangle$, number of walkers $M$, number of timesteps $N$, and step size $\Delta\tau$. ($H$ will be decomposed into $L$ Cholesky vectors.)
\For{$l = 1$ to $M$}
\State $|\phi_l^{(0)}\rangle \gets |\Psi_I\rangle,\;\; w_l^{(0)} \gets 1$ \Comment{Initialize the walker.}
\For{$k = 1$ to $N$}
\For{$\gamma = 1$ to $L$}
\State $\overline{x}_{\gamma} = -\sqrt{\Delta\tau}\frac{\langle\Psi_T |\hat{v}_{\gamma}|\phi_l^{(k-1)}\rangle}{\langle\Psi_T|\phi_l^{(k-1)}\rangle}$
\EndFor
\State Sample $\textbf{x}$ according to distribution $p(\textbf{x})$
\State $|\phi_l^{(k)}\rangle \gets B(\textbf{x}-\overline{\textbf{x}})|\phi_l^{(k-1)}\rangle$. \Comment{Update the walker}
\State $E_l^{(k-1)}\gets \frac{\langle\Psi_T |H|\phi_l^{(k-1)}\rangle}{\langle\Psi_T|\phi_l^{(k-1)}\rangle}$ \Comment{Compute local energy}
\State $\theta \gets \arg\left(\frac{\langle\Psi_T |\phi_l^{(k)}\rangle}{\langle\Psi_T|\phi_l^{(k-1)}\rangle}\right)$ \Comment{Compute the phase}
\State $w_l^{(k)} \gets w_l^{(k-1)} \times I(\textbf{x}, \overline{\textbf{x}}, \phi_l^{(k-1)})$ \Comment{Update weight}
\EndFor
\State $E_l^{(N)} \gets \frac{\langle\Psi_T |H|\phi_l^{(N)}\rangle}{\langle\Psi_T|\phi_l^{(N)}\rangle}$\Comment{Compute the local energy}
\EndFor
\State Output the energy $E \gets \frac{\sum_{l}^{M} w_l^{(N)} E_l^{(N)}}{\sum_l^M w_l^{(N)}}$
\end{algorithmic}
\end{algorithm}

\section{Local energy evaluation with classical shadows}  \label{app:local_energy}

The most time-consuming step in QC-AFQMC is the evaluation of the local energy. Here we discuss how this task is performed in our implementation compatible with classical shadows. We first note that the local energy numerator can be rewritten as
\begin{equation}
\begin{split}
    & \langle \Psi_T|H|\phi\rangle = \langle \Psi_T|U_{\phi} U_{\phi}^{\dagger} H U_{\phi}|\Phi_0\rangle\\
    & = \sum_{pr} \langle \Psi_T|U_{\phi}|\Phi_p^r\rangle\langle \Phi_p^r| \Bar{H}|\Phi_0\rangle + \sum_{pqsr} \langle \Psi_T|U_{\phi}|\Phi_{pq}^{rs}\rangle\langle \Phi_{pq}^{rs}| \Bar{H}|\Phi_0\rangle\\
    & = \sum_{pr} \langle \Psi_T|\phi_p^r\rangle\langle \Phi_p^r| \Bar{H}|\Phi_0\rangle + \sum_{pqsr} \langle \Psi_T|\phi_{pq}^{rs}\rangle\langle \Phi_{pq}^{rs}| \Bar{H}|\Phi_0\rangle,
\end{split}
\end{equation}
where $\Bar{H} = U_{\phi}^{\dagger} H U_{\phi}$ represents the rotated Hamiltonian and $|\Phi_0\rangle,\; |\Phi_p^r\rangle (|\Phi_{pq}^{rs}\rangle)$ denotes the Hartree-Fock state and its single (double) excited counterpart. We also have $|\phi_p^r\rangle = U_{\phi} |\Phi_p^r\rangle$. The second line in the above equation utilizes the resolution of identity and the fact that Hamiltonian has only up to two-body interactions.

The Hamiltonian rotation can be done with $\mathcal{O}(n^5)$ operations. In the above equation, the overlap amplitudes $\langle \Psi_T|\phi_{p}^{r}\rangle, \langle \Psi_T|\phi_{pq}^{rs}\rangle$ can be calculated with Matchgate shadows each with complexity $\mathcal{O}(n^4)$, as already discussed in the main text. Each matrix element can be computed using the Slater-Condon rules with complexity $\mathcal{O}(1)$. Due to the number of possible excited Slater determinants, which scales as $\mathcal{O}(n^4)$, the scaling of the evaluation of the local energy for each walker would therefore scale as $\mathcal{O}(n^8 + n^5 + n^4) = \mathcal{O}(n^8)$ using this approach.

We also note that there exist other ways to evaluate the local energy with Matchgate shadows. Sec.V C of Ref.~\cite{wan2023matchgate} proposed an alternative way to evaluate the local energy using the Grassmann algebra. In this approach, the Hamiltonian is first rewritten into the Majorana formalism. Then the mixed estimator of each Majorana operator $\langle\Psi_T|\gamma_S|\phi\rangle$ can be estimated in a similar fashion to the overlap amplitude. This approach leads to a scaling between $\mathcal{O}(n^9)$, due to the relevant matrices being non-invertible, making it slightly less favorable than the previous approach.

\section{Classical shadows revisited} \label{app:shadow_tomography}

In this section, we'll revisit classical shadows from a group representation perspective, which will be useful when we include noise effects. We closely follow the presentation in Ref.~\cite{chen2021robust}.

We start by introducing the notations and conventions used throughout these appendices. We consider a $n$-qubit system with Hilbert space $\mathcal{H}_n$. Its dimension is denoted by $d \equiv 2^n$. All the $n$-qubit operators (super-operators) live in a vector space defined as $\mathcal{L}(\mathcal{H}_n)$ ($\mathcal{L}(\mathcal{L}(\mathcal{H}_n))$). For better clarity, we make use of the Liouville representation of operators and super-operators: 
\begin{enumerate}
    \item Operators are notated using the double kets $|A\rrangle = \frac{1}{\sqrt{\text{tr}(A^{\dagger}A)}}A$ for its length $d^2$ vectorization for $A \in \mathcal{L}(\mathcal{H}_n)$ and $\llangle B|A \rrangle := \frac{\text{tr}(B^{\dagger} A)}{\sqrt{\text{tr}(B^{\dagger}B)\text{tr}(A^{\dagger}A)}}$;
    \item Super-operators, written as cursive letters, are mapped to $d^2 \times d^2$ matrices: any $\mathcal{E} \in \mathcal{L}(\mathcal{L}(\mathcal{H}_n))$ can be specified by its matrix elements $\mathcal{E}_{ij} := \llangle B_i|\mathcal{E}|B_j\rrangle$, where $\{|B_i\rrangle\}$ is an orthonormal basis for $\mathcal{L}(\mathcal{L}(\mathcal{H}_n))$.
\end{enumerate}

For any unitary $U$, its corresponding channel is denoted by $\mathcal{U}(\cdot):= U(\cdot)U^{\dagger}$. For any $|\phi\rangle \in \mathcal{H}_n$, $|\phi\rrangle$ is the vectorization of $|\phi\rangle\langle\phi|$. And hats indicate statistical estimators, e.g., $\hat{o}$ denotes an estimate for $o = \text{tr}(O\rho)$.

Classical shadows are based on a simple measurement primitive: for the quantum state $\rho$, apply a unitary $U$ randomly drawn from a distribution of unitaries $\mathcal{D}$ and measure in the computational basis. This produces measurement outcomes $b \in \{0, 1\}^{\otimes n}$ with probability $\langle b|U \rho U^{\dagger}|b\rangle$. One then inverts the unitary on the outcome $|b\rangle$ in post-processing, which amounts to storing a classical representation of $U^{\dagger} |b\rangle$. The distribution $\mathcal{D}$, from which the random unitaries $U$ are drawn, usually can be associated with a group $G$, e.g., random Clifford unitaries and the Clifford group. $\mathcal{M}$ can be viewed as a twirl of the measurement channel $\mathcal{M}_Z$, giving:
\begin{equation}
    \mathcal{M} := \underset{g\sim G}{\mathbb{E}} \mathcal{U}_g^{\dagger} \mathcal{M}_Z \mathcal{U}_g,\;\; \mathcal{M}_Z = \sum_{b\in\{0, 1\}^{n}} |b\rrangle\llangle b|,
\end{equation}
where $g$ is an element in $G$, and the random unitaries $\mathcal{U}$ is a unitary representation of $G$. Applying Schur's lemma, and assuming no multiplicities in $\mathcal{U}$ lead to
\begin{equation}
    \mathcal{M} = \sum_{\lambda \in R_G} f_{\lambda}\Pi_{\lambda}, \;\;\;\; f_{\lambda} = \frac{\text{tr}(\mathcal{M}_Z \Pi_{\lambda})}{\text{tr}(\Pi_{\lambda})}, \label{eq:quantum_channel}
\end{equation}
where $R_G$ are the irreducible representations (irreps) of $G$. The super-operators $\Pi_{\lambda} \in \mathcal{L}(\mathcal{L}(\mathcal{H}_n))$ are orthogonal projectors onto the irreducible subspaces $\Gamma_{\lambda}$, and $f_{\lambda}$ is the eigenvalue associated with each orthogonal projector.

\section{Matchgate shadows} \label{app:Matchgates}

In this section, we outline the basics of Matchgate shadows, closely following the presentation in Ref.~\cite{wan2023matchgate}. The Matchgate unitaries are based on the Majorana formalism. For a system of $n$ orbitals, $2n$ Majorana operators can be defined as
\begin{equation}
    \gamma_{2j-1} = a_j + a_j^{\dagger},\;\;\;\; \gamma_{2j} = -i(a_j - a_j^{\dagger}),
\end{equation}
for $j \in [n]:= \{1, \dots, n\}$. These have qubit representation
\begin{equation}
    \gamma_{2j-1} = \left(\Pi_{i=1}^{j-1} Z_i\right) X_j,\;\; \gamma_{2j} = \left(\Pi_{i=1}^{j-1} Z_i\right) Y_j, \label{eq:majorana_to_pauli}
\end{equation}
under the Jordan-Wigner (JW) transformation~\cite{wigner1928paulische}, where $X_i, Y_i, Z_i$ are the Pauli operators on qubit $i$. For a subset of indices $S \subseteq [2n]$, we denote by $\gamma_S$ the product of the Majorana operators indexed by the elements in $S$ in ascending order. That is,
\begin{equation}
    \gamma_S := \gamma_{\mu_1} \dots \gamma_{\mu_{|S|}},    
\end{equation}
for $S = \{\mu_1,\dots,\mu_{|S|}\} \subseteq [2n]$ with $\mu_1 < \dots < \mu_k$.

\subsection{Fermionic Gaussian states}
Matchgate circuits are qubit representations of fermionic Gaussian unitaries under the JW transformation. The fermionic Gaussian unitaries transform between valid sets of Majorana operators $\{\gamma_{2j-1}, \gamma_{2j}\}, j\in [n]$. Fermionic Gaussian states are the ground states and thermal states of non-interacting fermionic Hamiltonians. An $n$-mode Gaussian state is any state whose density operator $\varrho$ can be written as
\begin{equation}
    \varrho = \prod_{j=1}^{n}\frac{1}{2}\left(I - i\lambda_j \gamma_{2j-1}\gamma_{2j}\right),
\end{equation}
for some coefficients $\lambda_j \in [-1, 1]$. If $\lambda_j \in \{-1, 1\}$ for all $j \in [n]$, then $\varrho$ is a pure Gaussian state; otherwise, $\varrho$ is a mixed state. For any computational basis state $|b\rangle$, for example, we have
\begin{equation}
    |b\rangle\langle b| = \prod_{j=1}^n \frac{1}{2}\left[I - i(-1)^{b_j} \gamma_{2j-1}\gamma_{2j}\right].
\end{equation}
A Gaussian state can also be defined by its two-point correlations $\text{tr}(\varrho \gamma_{\mu} \gamma_{\nu})$, which is also known as its covariance matrix. An $n$-qubit Gaussian state $\rho$ has its covariance matrix defined as:
\begin{equation}
    (C_{\rho})_{\mu\nu} := -\frac{i}{2}\text{tr}([\gamma_{\mu}, \gamma_{\nu}]\rho)
\end{equation}
for $\mu, \nu \in [2n]$, which is a $2n \times 2n$ antisymmetric matrix. For any computational basis state $|b\rangle\langle b|$ as an example, its covariance matrix is:
\begin{equation}
    C_{|b\rangle} := \bigoplus_{j=1}^{n} \begin{pmatrix}
        0 & (-1)^{b_j}\\
        (-1)^{b_j +1} & 0
    \end{pmatrix}.
\end{equation}

A Gaussian state which is also an eigenstate of the number operator $\sum_{j=1}^n a_j^{\dagger} a_j$ is known as Slater determinant. Any $\zeta$-fermion Slater determinant $|\varphi\rangle$ can also be written as
\begin{equation}
    |\phi\rangle = a'^{\dagger}_1\dots a'^{\dagger}_{\zeta} |0\rangle, \;\; a'^{\dagger}_j = \sum_{k=1}^n V_{kj} a^{\dagger}_k = U_V a^{\dagger}_j U_V^{\dagger},
\end{equation}
for some $n\times n$ unitary matrix $V$. Hence a Slater determinant can also be specified by the first $\zeta$ columns of $V$. Finally the Majorana operators $\{\gamma'\}_{\mu\in [2n]}$ have the following relation
\begin{align}
    \gamma'_{2j-1} & = \sum_k\left[\Re(V_{jk}) \gamma_{2j-1} - \Im(V_{jk}) \gamma_{2j}\right],\\
    \gamma'_{2j} & = \sum_k\left[\Im(V_{jk}) \gamma_{2j-1} + \Re (V_{jk})\gamma_{2j}\right].
\end{align}
So the fermionic Gaussian unitary $U_{Q'}$ that implements this transformation is given by the orthogonal matrix
\begin{equation}
    Q' = \begin{pmatrix}
        R_{11} & \dots & R_{1n}\\
        \vdots & \ddots & \vdots\\
        R_{n1} & \dots & R_{nn}
    \end{pmatrix},\; R_{jk} := \begin{pmatrix}
        \Re(V_{jk}) & -\Im(V_{jk})\\
        \Im(V_{jk}) & \Re(V_{jk})
    \end{pmatrix} \label{eq:Q_prime}.
\end{equation}

\subsection{Matchgate 3-design}
The adjoint action of Fermionic Gaussian unitaries $U_Q$ on the Majorana operators obeys
\begin{equation}
    U_Q \gamma_{\mu} U_Q^{\dagger} = \sum_{\nu \in [2n]} Q_{\nu \mu} \gamma_{\nu},\;\; \mu \in [2n], \label{eq:orthogonal_transformation}
\end{equation}
where $Q$ belongs to the orthogonal group $O(2n)$. Matchgate circuits form a continuous group $\text{M}_n$ which is in one-to-one correspondence with the orthogonal group $O(2n)$, up to a global phase:
\begin{equation}
    \text{M}_n = \{U_Q: Q \in O(2n)\}.
\end{equation}
Ref.~\cite{wan2023matchgate} considered two distributions: i). “uniform” distribution over $\text{M}_n$, where uniformity is more precisely given by the normalized Haar measure $\mu$ on $O(2n)$; ii). uniform distribution over the discrete Borel group $B(2n)$, which is the intersection of $\text{M}_n$ and the $n$-qubit Clifford group $\text{Cl}_n$. The second consists of $2n \times 2n$ signed permutation matrices:
\begin{equation}
    \text{M}_n \cap \text{Cl}_n = \{U_Q: Q \in B(2n)\}.
\end{equation}

For $j \in \mathbb{Z}_{>0}$, we use $\mathcal{E}^{(j)}_{\text{M}_n}$ and $\mathcal{E}^{(j)}_{\text{M}_n\cap \text{Cl}_n}$ to denote the $j$-fold twirl channels corresponding to the distributions over $\text{M}_n$ and $\text{M}_n\cap \text{Cl}_n$, respectively:
\begin{align}
    \mathcal{E}^{(j)}_{\text{M}_n} & := \int_{O(2n)} d\mu(Q) \mathcal{U}_Q^{\otimes j},\\
    \mathcal{E}^{(j)}_{\text{M}_n\cap \text{Cl}_n} & := \frac{1}{|B(2n)|}\sum_{Q \in B(2n)} \mathcal{U}_Q^{\otimes j},
\end{align}
where we've used the fact that 2-fold Matchgate twirl is hermitian~\cite{wan2023matchgate}. Since the measurement channel $\mathcal{M}$ in the classical shadows procedure and the variance of the estimates obtained from the classical shadows are determined by the 2- and 3-fold twirls, it is necessary to compare $\mathcal{E}^{(j)}_{\text{M}_n}$ and $\mathcal{E}^{(j)}_{\text{M}_n\cap \text{Cl}_n}$ up to $j=3$.

Ref.~\cite{wan2023matchgate} proved that the $j$-fold twirl channels of these two distributions are equivalent up to $j=3$. Thus, the discrete ensemble of Clifford Matchgate circuits is a 3-design for the continuous Haar-uniform distribution over all Matchgate circuits, in the same way that the Clifford group is a unitary 3-design~\cite{zhu2017multiqubit}. Ref.~\cite{wan2023matchgate} stated this result informally as: “The group of Clifford Matchgate circuits forms a ‘Matchgate 3-design’~”. In the implemented Matchgate shadow protocol, we adopt the latter since it requires less memory to store a random signed permutation matrix compared to an orthogonal matrix. An additional virtue of sampling in $\text{M}_n \cap \text{Cl}_n$ is that Clifford circuits can be randomly compiled to convert coherent errors into incoherent errors~\cite{hashim2021randomized} (within the shadow circuit), leading to better satisfaction of the GTM assumption of noise. Our implementation of Matchgate circuits is given in App.~\ref{app:implementation}.

\subsection{Matchgate channel}
The $n$-qubit Matchgate group $\text{M}_n$ has $(2n+1)$ irreps, so we can define correspondingly $(2n+1)$ subspaces
\begin{equation}
    \Gamma_k := \text{span}\left\{\gamma_S: S\in\begin{pmatrix} [2n]\\k \end{pmatrix}\right\},\;\; k\in \{0,\dots,2n\},
\end{equation}
where $\left(\begin{smallmatrix}[2n]\\k\end{smallmatrix}\right)$ denotes the set of subsets of $[2n]$ of cardinality $k$. Therefore, we have $\mathcal{L}(\mathcal{H}_n) = \oplus_{k=0}^{2n}\Gamma_k$. For fermionic simulations, we usually deal with even operators so we only look at the even subspaces $\Gamma_{\text{even}} = \oplus_{l=0}^n \Gamma_{2l}$.

Now we can derive the details of the Matchgate channel, which will serve as the basis for calculations involving noise presented in the following section. As we discussed above, the measurement channel for any shadow scheme can be derived by specifying the 2-fold twirl $\mathcal{E}^{(2)}$
\begin{equation}
    \mathcal{M}(\rho) = \text{tr}_1 \left[\sum_{b\in \{0,1\}^n} \mathcal{E}^{(2)}_{\text{M}_n\cap \text{Cl}_n} \left(|b\rangle\langle b|^{\otimes 2}\right)(\rho \otimes I)\right],
\end{equation}
where $\text{tr}_1$ denotes the partial trace over the first tensor component. The 2-fold twirl for the Matchgate group has the following form
\begin{gather}
    \mathcal{E}^{(2)}_{\text{M}_n\cap \text{Cl}_n} = \sum_{k=0}^{2n} |\Upsilon_k^{(2)}\rrangle \llangle \Upsilon_k^{(2)}|,\\ 
    |\Upsilon_k^{(2)}\rrangle = \begin{pmatrix}
        2n\\ k
    \end{pmatrix}^{-1/2} \sum_{S \in \left(\begin{smallmatrix}
        [2n]\\ k
    \end{smallmatrix}\right)} |\gamma_S\rrangle |\gamma_S\rrangle.
\end{gather}
The standard derivation can be found in Section IV.B in Ref.~\cite{wan2023matchgate}. For our purpose, we'll take a shortcut by assuming all the projectors $\Pi_{2l} \in \mathcal{L}(\mathcal{L}(\mathcal{H}_n))$ are already known
\begin{equation}
    \Pi_{2l} := \sum_{S\in\left(\begin{smallmatrix} [2n] \\2l\end{smallmatrix}\right)} |\gamma_S\rrangle\llangle \gamma_S|,
\end{equation}
where we've defined $|\gamma_S\rrangle = \gamma_S/\sqrt{2^n}$, and $\llangle\gamma_S|\gamma_{S'}\rrangle = \delta_{SS'}$. We directly start from Eq.~\ref{eq:quantum_channel}, and the eigenvalues are hence computed as
\begin{equation}
\begin{split}
    f_{2l} & = \frac{\text{Tr}\left[\mathcal{M}_Z \Pi_{2l} \right]}{\text{Tr}\left[\Pi_{2l}\right]}\\
    & = \frac{\sum_{S\in\left(\begin{smallmatrix}
        [2n]\\ 2l
    \end{smallmatrix}\right)}\sum_{b\in\{0,1\}^n} \left|\llangle \gamma_S|b\rrangle\right|^2}{\sum_{S\in \left(\begin{smallmatrix}
        [2n]\\ 2l
    \end{smallmatrix}\right)}\llangle \gamma_S|\gamma_S \rrangle}\\
    & = \begin{pmatrix}
        2n\\2l
    \end{pmatrix}^{-1} \sum_{b\in\{0,1\}^n} \sum_{T\in \left(\begin{smallmatrix}
        [n]\\l
    \end{smallmatrix}\right)} \left|\frac{(-i)^{|T|}}{\sqrt{2^n}}(-1)^{\sum_{j\in T} b_j}\right|^2\\
    & = \begin{pmatrix}
        2n\\ 2l
    \end{pmatrix}^{-1} \begin{pmatrix}
        n\\ l
    \end{pmatrix},
\end{split}
\end{equation}
where we've used
\begin{equation}
    |b\rrangle = \frac{1}{\sqrt{2^n}} \sum_{T\in \left(\begin{smallmatrix}
        [n]\\l
    \end{smallmatrix}\right)} (-i)^{|T|}(-1)^{\sum_{j\in T} b_j} |\gamma_{\text{pairs}(T)}\rrangle,
\end{equation}
and pairs($T$) is defined as $\bigcup_{j\in T}\{2j-1, 2j\}$. We only consider the even subspaces for fermionic simulations. Therefore the Matchgate channel is
\begin{equation}
    \mathcal{M} = \sum_{l=0}^n \begin{pmatrix}
        2n\\ 2l
    \end{pmatrix}^{-1} \begin{pmatrix}
        n\\ l
    \end{pmatrix} \Pi_{2l}. \label{eq:Matchgate_channel}
\end{equation}

With the above definitions and properties of fermionic Gaussian transformation, we can finally discuss how to use the Matchgate shadow to compute the desired overlap integrals. The overlap can first be reconstructed as
\begin{equation}
\begin{split}
    \langle\Psi_T|\phi\rangle & = 2\sum_{l=0}^n \begin{pmatrix}
        2n\\ 2l
    \end{pmatrix} \begin{pmatrix}
        n\\ l
    \end{pmatrix}^{-1}\\
    & \times \underset{Q \sim B(2n)}{\mathbb{E}} \text{tr} \left[|\phi\rangle\langle 0| \Pi_{2l} \left(U_Q^{\dagger} |\hat{b}\rangle\langle \hat{b}| U_Q\right)\right],
\end{split}
\end{equation}
where the trace involving $\Pi_{2l}$ correspond to the coefficient of $z^l$ in the following polynomial~\cite{wan2023matchgate}
\begin{equation*}
\begin{split}
    q_{|\phi\rangle, |b\rangle}(z) & = \frac{i^{\zeta/2}}{2^{n-\zeta/2}}\\
    & \times \text{pf}\Bigl[\bigl(C_{|0\rangle} + zW^* Q'^T Q^T C_{|b\rangle} Q Q' W^{\dagger}\bigr)\big|_{\overline{S}_{\zeta}}\Bigr].
\end{split}
\end{equation*}
The pf in the above equation denotes the matrix Pfaffian, and $M|_S$ represents matrix $M$ restricted to rows and columns in set $S$. $\zeta$ is the particle number, $C_{|0\rangle}$ denotes the covariance matrix of the vacuum state, and $Q'$ is an orthogonal matrix defined from the Slater determinant $|\varphi\rangle$ according to Eq.~\ref{eq:Q_prime}. We've also defined the $W$ matrix
\begin{equation}
    W = \bigoplus_{j=1}^{\zeta}\frac{1}{\sqrt{2}}\begin{pmatrix}
        1 & -i\\
        1 & i
    \end{pmatrix}\bigoplus_{j=\zeta+1}^{n}\begin{pmatrix}
        1 & 0\\
        0 & 1
    \end{pmatrix},
\end{equation}
and $\overline{S}_{\zeta} := [2n] \backslash \{1, 3, \dots, 2\zeta-1\}$. All these coefficients can be computed using polynomial interpolation in $\mathcal{O}((n-\zeta/2)^4)$ time, where $\zeta$ is the number of electrons.

\subsection{Noise resilience} \label{app:eigen-operator}

In this subsection, we prove Theorem~\ref{Theorem:MainTheorem}. This result serves as the foundation of noise resilience observed in experiments performed on quantum hardware.
\MainTheorem*
\begin{proof}
We expand
\begin{equation}
    \mathcal{P}_{0, \zeta}\big[\Pi_{2l} \left(|\phi\rangle\langle \mathbf{0}|\right)\big] = \sum_i c_i \mathcal{P}_{0, \zeta}\big[\Pi_{2l} \left(|i\rangle\langle \mathbf{0}|\right)\big].
\end{equation}
By the results of Lemma~\ref{Lemma:MainLemma},
\begin{equation}
\begin{split}
    \mathcal{P}_{0, \zeta}\big[\Pi_{2l} \left(|\phi\rangle\langle \mathbf{0}|\right)\big] & = \sum_i c_i b_{2l} \ket{i}\bra{\mathbf{0}} \\
    & = b_{2l} \ket{\phi}\bra{\mathbf{0}}
\end{split}    
\end{equation}
as required.
\end{proof}

The above result is proved using the following Lemma. The general proof idea is to write
\begin{equation}
    \Pi_{\zeta} \left( \ket{i}\bra{\mathbf{0}} \right) = 2^{-n} \sum_{S\in\left(\begin{smallmatrix} [2n] \\ 2l \end{smallmatrix}\right)} \bra{\mathbf{0}} \gamma_S^\dag \ket{i} \gamma_S.
\end{equation}
The action of the Majorana strings is to `create electrons' in the $\ket{\mathbf{0}}$ state. Terms in the sum evaluate to zero unless $\zeta$ electrons are created. For small/large numbers of Majorana operators in each string, insufficient electrons are created, and the whole sum evaluates to zero. In the intermediate regime, each Majorana string can be divided into two parts; a part that creates $\zeta$ electrons in the correct positions, and a part that contributes an ultimately unimportant phase and increases the multiplicity of a given term. The more interesting part is the part that creates the electrons, which acts as a projector from an $x$-electron subspace to an $x+\zeta$ electron subspace. The action of projecting into the subspace spanned by the vacuum state and computational basis states with Hamming weight $\zeta$ retains only terms that project from zero electrons to $\zeta$-electrons. Combinatorics give the desired values of $b_{2l}$.

\begin{lemma}\label{Lemma:MainLemma}
Given an $n$ qubit computational basis state $\ket{i}$ with even Hamming weight $\zeta$, then $\mathcal{P}_{0, \zeta} \big[\Pi_{2l} \left( \ket{i}\bra{\mathbf{0}} \right) = b_{2l} \ket{i}\bra{\mathbf{0}}$ with
\begin{equation}
    b_{2l}=\begin{cases}
	2^{\zeta - n} \binom{n-\zeta}{l-\frac{\zeta}{2}}, & \text{if $\frac{\zeta}{2} \leq l \leq n - \frac{\zeta}{2}$}\\
        0, & \text{otherwise} \end{cases}
\end{equation}
where $\mathcal{P}_{0, \zeta}$ is the projection from onto the subspace spanned by the vacuum state and computational basis states with Hamming weight $\zeta$.
\end{lemma}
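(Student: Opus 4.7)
The plan is to expand $\Pi_{2l}(\ket{i}\bra{\mathbf{0}})$ in the Majorana basis, identify the few strings whose coefficient can be nonzero, and then use a phase-cancellation argument to isolate $\ket{i}\bra{\mathbf{0}}$ after $\mathcal{P}_{0,\zeta}$. Using Hilbert--Schmidt orthonormality of $\{\gamma_S/\sqrt{2^n}\}_{|S|=2l}$, I would start from
\begin{equation*}
    \Pi_{2l}\!\left(\ket{i}\bra{\mathbf{0}}\right) = 2^{-n} \sum_{|S|=2l} \bra{\mathbf{0}}\gamma_S^\dagger\ket{i}\,\gamma_S .
\end{equation*}
For each $S$ I would decompose by qubit into the single-Majorana set $T(S)=\{j:|S\cap\{2j-1,2j\}|=1\}$, on which $\gamma_S$ acts as a bit flip (carrying phase $+1$ or $+i$ on a $\ket{0}_j$ component for an $X_j$ or $Y_j$ choice), and the double-Majorana set $U(S)=\{j:\{2j-1,2j\}\subseteq S\}$, on which it acts as $iZ_j$; the remaining qubits are untouched, and the Jordan--Wigner ordering sign is a common factor for all $X/Y$ choices at the $T$ qubits since $2j-1$ and $2j$ are adjacent. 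Then $\bra{\mathbf{0}}\gamma_S^\dagger\ket{i}\neq 0$ iff the flipped qubits match $\mathrm{supp}(i)$, i.e.\ $T(S)=\mathrm{supp}(i)$; this forces $|T|=\zeta$ and $|U|=l-\zeta/2$, and therefore requires $\zeta/2\le l\le n-\zeta/2$ (outside, $\Pi_{2l}(\ket{i}\bra{\mathbf{0}})\equiv 0$ and $b_{2l}=0$). Inside the window, the admissible $S$ number $2^\zeta\binom{n-\zeta}{l-\zeta/2}$: a choice of $U\subseteq[n]\setminus\mathrm{supp}(i)$ of size $l-\zeta/2$ times the two Majorana choices at each of the $\zeta$ qubits of $T$.

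The main obstacle is to show that $\mathcal{P}_{0,\zeta}$ kills every basis operator other than $\ket{i}\bra{\mathbf{0}}$. On the $\{0,\zeta\}$-block, each admissible $\gamma_S$ has nonzero entries only between pairs whose supports differ exactly on $\mathrm{supp}(i)$, so the a priori candidates are $\ket{i}\bra{\mathbf{0}}$, $\ket{\mathbf{0}}\bra{i}$, and $\ket{a}\bra{b}$ with $|a|=|b|=\zeta$ and $a\oplus b=\mathrm{supp}(i)$. I would dispatch the latter two by summing the per-$S$ product $\bra{\mathbf{0}}\gamma_S^\dagger\ket{i}\bra{a}\gamma_S\ket{b}$ over the $X$-versus-$Y$ choices at $T$: swapping $\gamma_{2j-1}\leftrightarrow \gamma_{2j}$ at qubit $j$ multiplies $\bra{x}\gamma_S\ket{y}$ by $+i$ when the flip there is $0\to 1$ and by $-i$ when it is $1\to 0$. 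Since $\bra{\mathbf{0}}\gamma_S^\dagger\ket{i}$ arises from $\zeta$ flips of type $0\to 1$ and then conjugation (so its $Y$-factor is $-i$), every qubit of $T$ that is flipped $1\to 0$ in $\bra{a}\gamma_S\ket{b}$ contributes a per-qubit $\{X,Y\}$-sum $1+(-1)=0$, while qubits flipped $0\to 1$ contribute $1+1=2$. For $\ket{\mathbf{0}}\bra{i}$ all $\zeta$ qubits of $T$ are flipped $1\to 0$, and for $\ket{a}\bra{b}$ with $a\neq\mathbf{0}$ the constraint $|a|=|b|=\zeta$ forces exactly $\zeta/2\ge 1$ such qubits, so in both cases the whole sum vanishes.

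Finally, for the $\ket{i}\bra{\mathbf{0}}$ component the two factors $\bra{\mathbf{0}}\gamma_S^\dagger\ket{i}$ and $\bra{i}\gamma_S\ket{\mathbf{0}}$ are complex conjugates, so each admissible $S$ contributes $|\bra{i}\gamma_S\ket{\mathbf{0}}|^2=1$ and the coefficient is $2^{-n}\cdot 2^\zeta\binom{n-\zeta}{l-\zeta/2}=b_{2l}$. The degenerate case $\zeta=0$ (where $\ket{i}=\ket{\mathbf{0}}$ and only $T=\emptyset$ strings matter) is handled separately by $\gamma_S\ket{\mathbf{0}}=i^{|U|}\ket{\mathbf{0}}$, and recovers the same formula via $\binom{n-0}{l-0}=\binom{n}{l}$. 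The cancellation step is the substantive part of the argument; everything else is Majorana basis expansion and straightforward counting.
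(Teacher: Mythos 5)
Your proof is correct and follows essentially the same route as the paper's: expand $\Pi_{2l}(\ket{i}\bra{\mathbf{0}})$ in Majorana strings, note that only strings with exactly $\zeta/2$ ``unconnected'' Majorana pairs flipping $\mathrm{supp}(i)$ and connected ($Z$-type) pairs on the remaining qubits contribute, remove the unwanted components by summing over the $X/Y$ choices, and count the $2^{\zeta}\binom{n-\zeta}{l-\zeta/2}$ admissible strings --- the only organizational difference being that the paper packages the $X/Y$ cancellation as an operator identity (Lemma~\ref{Lemma:TwoElectronEval}) and then applies $\mathcal{P}_{0,\zeta}$, whereas you run the same cancellation pairwise at the level of matrix elements inside the $\{0,\zeta\}$ block. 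One small inaccuracy --- qubits outside $T(S)\cup U(S)$ are not ``untouched'' but can carry Jordan--Wigner $Z$ factors --- is harmless, since those factors are diagonal, act as $+1$ on $\ket{\mathbf{0}}$, and are identical for the two Majorana choices paired in your cancellation step.
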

\begin{proof}
First expand
\begin{equation}
    \Pi_{\zeta} \left( \ket{i}\bra{\mathbf{0}} \right) = 2^{-n} \sum_{S\in\left(\begin{smallmatrix} [2n] \\ 2l \end{smallmatrix}\right)} \bra{\mathbf{0}} \gamma_S^\dag \ket{i} \gamma_S
\end{equation}
using the definition of super-operators from App.~\ref{app:shadow_tomography}. The set $S$ contains the $\binom{2n}{2l}$ possible Majorana strings formed by choosing $2l$ Majorana operators acting on $n$ qubits. Recall the qubit representation of Majorana operators under the Jordan-Wigner transform
\begin{equation*}
    \gamma_{2j-1} = \left(\Pi_{i=1}^{j-1} Z_i\right) X_j,\;\; \gamma_{2j} = \left(\Pi_{i=1}^{j-1} Z_i\right) Y_j,
\end{equation*}
We refer to `connected' Majorana operators as pairs of the form $\gamma_{k} \gamma_{k+1} = iZ_{\frac{k+1}{2}}$ for $k$ odd. We refer to `unconnected' Majorana operators as pairs of the form $\gamma_{i} \gamma_{j}$ such that $j \neq i+1$ if $i$ is odd. Unconnected Majorana operators take the form $\gamma_{i} \gamma_{j} = p_{i,j} O_{i} \vec{Z}_{i+1}^{j-1} O_{j}$ where $p_{i,j} \in \{\pm 1, \pm i\}$ and $O \in \{X, Y\}$, and the notation $\vec{Z}_a^b := \bigotimes_{k=a}^b Z_k$. Hence, observe that connected Majorana operators act as the identity on $\ket{\mathbf{0}}$, while unconnected Majorana operators act as $\gamma_{i} \gamma_{j} \ket{\mathbf{0}} = \ket{0...0 1_i 0...0 1_j 0...0}$ (up to a complex phase). Because each unconnected pair increases the Hamming weight by 2, we require $\gamma_S$ to contain exactly $\zeta/2$ unconnected pairs, and all other pairs to be connected pairs. If this requirement is not satisfied, $\bra{\mathbf{0}} \gamma_S^\dag \ket{i} = 0$. \\

First consider the case $2l < \zeta$. In this case, $\gamma_S$ contains $2l < \zeta$ Majorana operators, and it is impossible to form $\zeta/2$ unconnected pairs. Hence, every term in the sum evaluates to zero, and $b_{2l}=0$, as required.\\

For $2l > 2n - \zeta$, observe that because the Hamming weight of $\ket{i}$ is $\zeta$, there are $(n-\zeta)$ $0$'s in $\ket{i}$. After placing a connected Majorana pair on each of these $0$ qubits, the remaining number of Majorana operators to distribute onto the $\zeta$ 1's in $\ket{i}$ is given by $2l - 2(n-\zeta) > 2n - \zeta - 2(n-\zeta) = \zeta$. It is necessary to place greater than $\zeta$ Majorana operators on the $\zeta$ qubits. As a result, at least two of the Majorana's will form a connected pair. This leaves fewer than $\zeta/2$ unconnected pairs, so $\bra{\mathbf{0}} \gamma_S^\dag \ket{i} = 0$. Hence, every term in the sum evaluates to zero, and $b_{2l}=0$, as required.\\

Proving the result for $\frac{\zeta}{2} \leq l \leq n - \frac{\zeta}{2}$ is more involved, and proceeds via explicit calculation. As discussed above, terms in the sum are only non-zero if $\gamma_S$ contains exactly $\zeta/2$ unconnected pairs. Denote the $\zeta$ qubits with value 1 in $\ket{i}$ as $[i] = [i_1, ..., i_\zeta]$, with $i_\alpha < i_\beta$ if $\alpha<\beta$. Denote the remaining $(n-\zeta)$ qubits with value 0 in $\ket{i}$ as $[j] = [j_1, ..., j_{n-\zeta}]$. Observe that $[i] \cap [j] = \emptyset$. 

The only non-zero terms in the sum correspond to strings with $\zeta$ Majorana operators forming $\zeta/2$ unconnected pairs on $[i]$ and $(2l -\zeta)$ Majorana operators forming $l-\frac{\zeta}{2}$ connected pairs on $[j]$. First consider the part of the string acting on $[i]$. The string $\gamma_{i_1} \gamma_{i_2} ... \gamma_{i_\zeta}$ can be written as a tensor product of non-overlapping (commuting) Pauli strings:
\begin{equation}
    \bigotimes_{\substack{k=1\\k~\text{odd}}}^{\zeta-1} p_{i_k,i_{k+1}} O_{i_k} \vec{Z}_{(i_k+1)}^{(i_{k+1}-1)} O_{i_{k+1}}.
\end{equation}
Note that the qubits acted on by $Z$ are not in the set $[i]$. Next, consider the part of the string acting on $[j]$. There are $\binom{n-\zeta}{l-\frac{\zeta}{2}}$ (paired) locations to place the connected pairs, denote any such choice as $[\bar{j}]$. The `connected part' of the string can be written as $\bigotimes_{\substack{k=1}}^{l-\frac{\zeta}{2}} \imath Z_{\bar{j}_k}$, where here we have used $\imath = \sqrt{-1}$ to avoid confusion with the index $i$.

Then explicitly compute the sum over these Majorana strings (qubits that are not explicitly acted on in the following expressions are implicitly acted on with the identity operator): 
\begin{widetext}
\begin{equation}
\begin{split}
    \Pi_{2l} \left( \ket{i}\bra{\mathbf{0}} \right) & = 2^{-n} \sum_{S\in\left(\begin{smallmatrix} [2n] \\ 2l \end{smallmatrix}\right)} \bra{\mathbf{0}} \gamma_S^\dag \ket{i} \gamma_S \\
    & = 2^{-n} \sum_{[\bar{j}] \in \binom{n-\zeta}{l-\frac{\zeta}{2}}} \sum_{\substack{O_{i_1} \in \{X_{i_1} Y_{i_1}\} \\ O_{i_2} \in \{X_{i_2},Y_{i_2}\} \\ ... \\ O_{i_\zeta} \in \{X_{i_\zeta},Y_{i_\zeta}\}}} \bra{\mathbf{0}} \left( \bigotimes_{\substack{k=1 \\ k~\text{odd}}}^{\zeta-1} p_{i_k,i_{k+1}} O_{i_k} \vec{Z}_{(i_k+1)}^{(i_{k+1}-1)} O_{i_{k+1}}  \bigotimes_{\substack{\eta=1}}^{l-\frac{\zeta}{2}} \imath Z_{\bar{j}_\eta} \right)^\dag  \ket{i} \\
    & \times \left( \bigotimes_{\substack{k=1 \\ k~\text{odd}}}^{\zeta-1} p_{i_k,i_{k+1}} O_{i_k} \vec{Z}_{(i_k+1)}^{(i_{k+1}-1)} O_{i_{k+1}} \bigotimes_{\substack{\eta=1}}^{l-\frac{\zeta}{2}} \imath Z_{\bar{j}_\eta}\right) \\
    & = 2^{-n} \sum_{[\bar{j}] \in \binom{n-\zeta}{l-\frac{\zeta}{2}}}\sum_{\substack{O_{i_1} \in \{X_{i_1} Y_{i_1}\} \\ O_{i_2} \in \{X_{i_2},Y_{i_2}\} \\ ... \\ O_{i_\zeta} \in \{X_{i_\zeta},Y_{i_\zeta}\}}} \bigotimes_{\substack{k=1 \\ k~\text{odd}}}^{\zeta-1} \bigotimes_{\substack{\eta=1}}^{l-\frac{\zeta}{2}} \bra{0_{i_k} ... 0_{i_{k+1}}... 0_{\bar{j}_\eta}} O_{i_k} \vec{Z}_{(i_k+1)}^{(i_{k+1}-1)} O_{i_{k+1}}  Z_{\bar{j}_\eta} \ket{1_{i_k}0 ...0 1_{i_{k+1}}0...0_{\bar{j}_\eta}} \\
    & \times \left( O_{i_k} \vec{Z}_{(i_k+1)}^{(i_{k+1}-1)} O_{i_{k+1}} Z_{\bar{j}_\eta}  \right) \\
    & = 2^{-n} \left( \bigotimes_{\substack{k=1 \\ k~\text{odd}}}^{\zeta-1} \sum_{\substack{O_{i_k} \in \{X_{i_k} Y_{i_k}\} \\ O_{i_{k+1}} \in \{X_{i_{k+1}},Y_{i_{k+1}}\} }} \bra{0_{i_k} 0_{i_{k+1}}} O_{i_k} O_{i_{k+1}}   \ket{1_{i_k}1_{i_{k+1}}} O_{i_k} \vec{Z}_{(i_k+1)}^{(i_{k+1}-1)} O_{i_{k+1}} \right)  \left( \sum_{[\bar{j}] \in \binom{n-\zeta}{l-\frac{\zeta}{2}}} \bigotimes_{\substack{\eta=1}}^{l-\frac{\zeta}{2}} Z_{\bar{j}_\eta} \right) 
\end{split}
\end{equation}
where in the first term of the final line we have swapped the order of the sum and tensor product, such that we are essentially considering $\zeta/2$ copies of states with Hamming weight two. Using the results of Lemma~\ref{Lemma:TwoElectronEval}, the first term in the final line above evaluates to 
\begin{equation}
    \bigotimes_{\substack{k=1 \\ k~\text{odd}}}^{\zeta-1} 2^2 \ket{1}\bra{0}_{i_k} \otimes \ket{1}\bra{0}_{i_{k+1}} \otimes \vec{Z}_{(i_k+1)}^{(i_{k+1}-1)}  = 2^\zeta \ket{i}_{[i]}\bra{\mathbf{0}}_{[i]} \bigotimes_{\substack{k=1 \\ k~\text{odd}}}^{\zeta-1}\vec{Z}_{(i_k+1)}^{(i_{k+1}-1)}
\end{equation}
where we use $\ket{x}_{[i]}$ to denote the qubits of computational basis state $\ket{x}$ corresponding to the set $[i]$. Hence, denoting $|y|$ as the Hamming weight of a computational basis state $\ket{y}$,
\begin{equation}
\begin{split}
    \mathcal{P}_{0, \zeta} \left[ \Pi_{2l} \left( \ket{i}\bra{\mathbf{0}} \right) \right] & = 2^{\zeta-n} \sum_{[\bar{j}] \in \binom{n-\zeta}{l-\frac{\zeta}{2}}} \mathcal{P}_{0, \zeta} \left[\ket{i}_{[i]}\bra{\mathbf{0}}_{[i]} \bigotimes_{\substack{k=1 \\ k~\text{odd}}}^{\zeta-1}\vec{Z}_{(i_k+1)}^{(i_{k+1}-1)} \bigotimes_{\substack{\eta=1}}^{l-\frac{\zeta}{2}} Z_{\bar{j}_\eta} \right] \\
    & = 2^{\zeta-n} \sum_{[\bar{j}] \in \binom{n-\zeta}{l-\frac{\zeta}{2}}} \sum_{|x|,|y|=\{0,\zeta\}} \ket{x}\bra{x} \left( \ket{i}_{[i]}\bra{\mathbf{0}}_{[i]} \bigotimes_{\substack{k=1 \\ k~\text{odd}}}^{\zeta-1}\vec{Z}_{(i_k+1)}^{(i_{k+1}-1)} \bigotimes_{\substack{\eta=1}}^{l-\frac{\zeta}{2}} Z_{\bar{j}_\eta} \right) \ket{y}\bra{y} \\
    & = 2^{\zeta-n} \sum_{[\bar{j}] \in \binom{n-\zeta}{l-\frac{\zeta}{2}}} \ket{i}\bra{\mathbf{0}}\\
    & = 2^{\zeta-n} \binom{n-\zeta}{l - \frac{\zeta}{2}} \ket{i}\bra{\mathbf{0}}
\end{split}
\end{equation}
\end{widetext}
Thus for $\frac{\zeta}{2} \leq l \leq n - \frac{\zeta}{2}$, $\mathcal{P}_{0, \zeta} \left[ \Pi_{2l} \left( \ket{i}\bra{\mathbf{0}} \right) \right] = 2^{\zeta-n} \binom{n-\zeta}{l - \frac{\zeta}{2}} \ket{i}\bra{\mathbf{0}}$,  as required.
\end{proof}

\begin{lemma}\label{Lemma:TwoElectronEval}
    The sum
    \begin{equation}
        \sum_{\substack{O_{i} \in [X_{i} Y_{i}] \\ O_{j} \in [X_{j},Y_{j}]}} \bra{0_{i} 0_{j}} O_{i} O_{j} \ket{1_{i} 1_{j}} O_{i} \vec{Z}_{i+1}^{j-1} O_{j}
    \end{equation}
    evaluates to
    \begin{equation}
        2^2 \ket{1}\bra{0}_{i} \otimes \ket{1}\bra{0}_{j} \otimes \vec{Z}_{i+1}^{j-1}
    \end{equation}
    where $\vec{Z}_a^b := \bigotimes_{k=a}^b Z_k$.
\end{lemma}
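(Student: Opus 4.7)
The plan is to exploit the fact that the three operator factors $O_i$, $\vec{Z}_{i+1}^{j-1}$, and $O_j$ act on mutually disjoint sets of qubits, which causes the double sum over $(O_i,O_j)$ to decouple completely into a product of two independent single-qubit sums.

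First I would observe that $\vec{Z}_{i+1}^{j-1}$ acts only on qubits strictly between $i$ and $j$, while $O_i$ and $O_j$ act only on qubit $i$ and qubit $j$, respectively. Hence $\vec{Z}_{i+1}^{j-1}$ commutes with both $O_i$ and $O_j$, and since it does not depend on the summation variables it can be pulled out of the sum as a common tensor factor. The claim then reduces to the purely single-qubit identity
$$\sum_{\substack{O_i\in\{X_i,Y_i\}\\ O_j\in\{X_j,Y_j\}}} \langle 0_i 0_j | O_i O_j | 1_i 1_j \rangle \, O_i\otimes O_j \;=\; 4\,\ket{1}\bra{0}_i\otimes \ket{1}\bra{0}_j.$$

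Because $O_i$ and $O_j$ act on different qubits, the matrix element factorizes as $\langle 0|O_i|1\rangle\langle 0|O_j|1\rangle$ and the operator is already a tensor product, so the double sum splits as
$$\Bigl(\sum_{O\in\{X,Y\}}\langle 0|O|1\rangle \, O\Bigr)\otimes \Bigl(\sum_{O\in\{X,Y\}}\langle 0|O|1\rangle \, O\Bigr),$$
with the first tensor factor acting on qubit $i$ and the second on qubit $j$. A direct evaluation using $\langle 0|X|1\rangle=1$ and $\langle 0|Y|1\rangle=-i$ gives $\sum_{O\in\{X,Y\}}\langle 0|O|1\rangle \, O = X - iY = 2\,\ket{1}\bra{0}$, where the last equality is the standard Pauli lowering-operator identity. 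Taking the tensor product reproduces the prefactor $2\cdot 2 = 4$ together with the correct rank-one operators on qubits $i$ and $j$, and restoring the $\vec{Z}_{i+1}^{j-1}$ factor yields the full claim.

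I do not expect any conceptual obstacle: the entire argument is a short direct computation. The only thing to watch carefully is the phase bookkeeping coming from $Y\ket{1}=i\ket{0}$, so that $\langle 0|Y|1\rangle=-i$ and the relevant lowering-operator identity is $\ket{1}\bra{0}=(X-iY)/2$ rather than $(X+iY)/2$; getting this sign right is exactly what ensures the cross terms $X\otimes Y$ and $Y\otimes X$ add constructively rather than cancelling, producing the factor of $4$ instead of $0$.
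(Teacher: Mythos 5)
Your proof is correct and takes essentially the same route as the paper's: a direct elementary computation of the matrix elements $\langle 0|O|1\rangle$ followed by the identity $X - iY = 2\ket{1}\bra{0}$, the only cosmetic difference being that you factor the double sum into two single-qubit sums while the paper first writes out the four-term expansion $\left(X_iX_j - Y_iY_j - i(X_iY_j + Y_iX_j)\right)\vec{Z}_{i+1}^{j-1}$ and then factors. One harmless slip in your closing remark: $Y\ket{1} = -i\ket{0}$, not $i\ket{0}$, but the value $\langle 0|Y|1\rangle = -i$ that you actually use is correct, so the computation stands.
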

\begin{proof}
    Explicitly computing the terms in the sum yields
    \begin{equation}
        \left(X_i X_j - Y_i Y_j - \imath(X_iY_j + Y_iX_j)\right)\vec{Z}_{i+1}^{j-1}. 
    \end{equation}
    Using the decomposition $X = \left(\ket{0}\bra{1} + \ket{1}\bra{0} \right)$ and $Y = \left(-\imath \ket{0}\bra{1} + \imath \ket{1}\bra{0}\right)$ yields
    \begin{equation}
        2^2 \ket{1}\bra{0}_{i} \otimes \ket{1}\bra{0}_{j} \otimes \vec{Z}_{i+1}^{j-1}
    \end{equation}
    as required.
\end{proof}

\section{Robust Matchgate shadows}\label{app:robust_shadow} 

In this section, we formulate the robust Matchgate shadow protocol, providing an alternative derivation from those in Refs.~\cite{zhao2023group,wu2023error}. 

\subsection{Robust shadow protocol}
In this part, we review the robust shadow estimation~\cite{chen2021robust}, which serves as a basis for robust Matchgate shadows. Eq.~\ref{eq:quantum_channel} is particularly useful when it comes to noise, where the channel is modified by noise as
\begin{equation}
    \widetilde{\mathcal{M}} = \sum_{\lambda \in R_G} \widetilde{f}_{\lambda}\Pi_{\lambda},\;\;\;\; \widetilde{f}_{\lambda} = \frac{\text{tr}(\mathcal{M}_Z \Lambda \Pi_{\lambda})}{\text{tr}(\Pi_{\lambda})}, \label{eq:noisy_channel}
\end{equation}
where $\widetilde{f}_{\lambda}$ is the noisy eigenvalue, $\Lambda$ denotes the noise channel, and we've assumed the noise is gate-independent, time-stationary, and Markovian (GTM)~\cite{chen2021robust}. Within this assumption, the noise-free expectation values of $\text{tr}(O_i\rho)$ could be retrieved by inverting this noisy channel, if $\rho$ is perfectly prepared on quantum computers. To achieve that, $\{\widetilde{f}_{\lambda}\}$ need to be determined first, and different measurement schemes have been designed for that according to the specific group $G$ used. Such a protocol is known as robust shadow estimation~\cite{chen2021robust}.

\subsection{Noisy Matchgate channel}
In this section we provide derivations for the recently introduced robust Matchgate shadows approach~\cite{zhao2023group, wu2023error}. Our derivations closely follow previous work on Matchgate randomized benchmarking~\cite{helsen2022matchgate}, highlighting the close connection between these two topics.

The robust shadow protocol assumes the quantum noise satisfies the GTM assumption, in which case the measurement channel is only modified in the eigenvalues associated with each projector. The idea is therefore to determine those noisy coefficients $\widetilde{f}_{\lambda}$ and apply the noisy inverse channel $\widetilde{\mathcal{M}}^{-1}$ when computing the expectation values. For the noisy case, we can generalize the derivation for eigenvalues in the noiseless case into
\begin{widetext}
\begin{equation}
\begin{split}
    \widetilde{f}_{2l} = \frac{\text{Tr}\left[\mathcal{M}_Z \Lambda \Pi_{2l} \right]}{\text{Tr}\left[\Pi_{2l}\right]} & = \begin{pmatrix}
        2n\\2l
    \end{pmatrix}^{-1} \sum_{S\in\left(\begin{smallmatrix}
        [2n]\\ 2l
    \end{smallmatrix}\right)}\sum_{b\in\{0,1\}^n} \llangle b|\Lambda|\gamma_S\rrangle\llangle \gamma_S|b\rrangle\\
    & = \begin{pmatrix}
        2n\\2l
    \end{pmatrix}^{-1} \sum_{b\in\{0,1\}^n} \sum_{T\in \left(\begin{smallmatrix}
        [n]\\l
    \end{smallmatrix}\right)} \left|\frac{(-i)^{|T|}}{\sqrt{2^n}} (-1)^{\sum_{j\in T} b_j}\right|^2 \llangle \gamma_{\text{pairs}(T)}|\Lambda|\gamma_{\text{pairs}(T)}\rrangle\\
    & = \begin{pmatrix}
        2n\\ 2l
    \end{pmatrix}^{-1} \sum_{T\in \left(\begin{smallmatrix}
        [n]\\l
    \end{smallmatrix}\right)} \llangle \gamma_{\text{pairs}(T)}|\Lambda|\gamma_{\text{pairs}(T)}\rrangle. \label{eq:noisy_matchgate_coeff}
\end{split}    
\end{equation}
\end{widetext}
To get $\widetilde{f}_{2l}$, we need to determine $\llangle \gamma_{\text{pairs}(T)}|\Lambda |\gamma_{\text{pairs}(T)}\rrangle$, which corresponds to the diagonal elements of the Liouville representation of noise channel $\Lambda$ in the $|\gamma_S\rrangle$ basis.

Interestingly, a recent work on randomized benchmarking~\cite{helsen2022matchgate} proposed an efficient way to characterize the noise in the quantum hardware by defining and computing a set of Majorana fidelities $\lambda_k$ using random Matchgate circuits. Here we reformulate some of their derivations using our established conventions and get more insights into the interpretation and connection between $\lambda_k$ and $\widetilde{f}_{2l}$.

The protocol consists of multiple rounds with varying parameters $k \in [2n]$ and circuit sequence lengths $m$. Due to our focus on fermionic quantum simulations, we'll only discuss the case when $k$ is even. Each round starts with the preparation of the all-zero state $|\mathbf{0}\rangle$. This is followed by $m$ Matchgate unitaries $U_{Q_1}, \dots, U_{Q_m}$, chosen uniformly and independently at random. Finally, all qubits are measured in the $Z$-basis. By averaging over many random sequences, we obtain an estimate $\hat{c}_{2l}(m)$ of the weighted average
\begin{equation}
    c_{2l}(m) = \frac{1}{|B(2n)|^m} \sum_{\{Q\} \in B(2n)} \sum_{b\in\{0,1\}^n} \alpha_{2l}(b, Q) p(b|Q, m), \label{eq:c_2l}
\end{equation}
where $Q = Q_m\dots Q_1$ and $p(b|Q, m)$ represents the probability of measurement outcome $|b\rangle$. A correlation function $\alpha_{2l}$ has also been defined as
\begin{equation}
    \alpha_{2l}(b, Q) = \text{tr}\left[|b\rangle\langle b|\Pi_{2l}\left(U_Q \rho_0 U^{\dagger}_Q\right)\right]. \label{eq:alpha_2l}
\end{equation}
We estimate $p(b|Q,m)$ from the measurement outcomes received from the quantum computer, whereas we compute $\alpha_{2l}(b,Q)$ analytically. Assuming the state preparation and measurement (SPAM) are perfect, we have
\begin{widetext}
\begin{equation}
\begin{split}
    c_{2l}(m) & = \frac{1}{|B(2n)|^m} \sum_{\{Q\} \in B(2n)}\sum_{b\in\{0,1\}^n} \alpha_{2l}(b, Q_m \dots Q_1) \llangle b|\Lambda \mathcal{U}_{Q_m} \dots \Lambda \mathcal{U}_{Q_1} |\mathbf{0}\rrangle\\
    & = \frac{1}{|B(2n)|^m} \sum_{\{Q\} \in B(2n)} \sum_{b\in\{0,1\}^n} \llangle b \otimes b|(\Pi_{2l}\otimes \Lambda) \mathcal{U}^{\otimes 2}_{Q_m} (\mathcal{I} \otimes \Lambda) \mathcal{U}^{\otimes 2}_{Q_{m-1}} \dots (\mathcal{I} \otimes \Lambda) \mathcal{U}^{\otimes 2}_{Q_1} |\mathbf{0}\otimes \mathbf{0}\rrangle\\
    & = \sum_{b\in\{0,1\}^n} \llangle b^{\otimes 2}| (\Pi_{2l}\otimes \Lambda) \left[\sum_{k'=0}^{2n} |\Upsilon_{k'}^{(2)}\rrangle\llangle \Upsilon_{k'}^{(2)}| (\mathcal{I} \otimes \Lambda)\right]^{m-1} \sum_{k'=0}^{2n} |\Upsilon_{k'}^{(2)}\rrangle\llangle \Upsilon_{k'}^{(2)} |\mathbf{0}^{\otimes 2}\rrangle\\
    & = \sum_{b\in\{0,1\}^n} \llangle b^{\otimes 2} |(\mathcal{I} \otimes \Lambda)|\Upsilon_k^{(2)}\rrangle \left[\llangle \Upsilon_k^{(2)}| (\mathcal{I} \otimes \Lambda)|\Upsilon_k^{(2)}\rrangle\right]^{m-1} \llangle \Upsilon_k^{(2)} |\mathbf{0}^{\otimes 2}\rrangle. \label{eq:c_2l(m)}
\end{split}
\end{equation}
\end{widetext}
Ref.~\cite{helsen2022matchgate} defined the term $\llangle \Upsilon_k^{(2)}| (\mathcal{I} \otimes \Lambda)| \Upsilon_k^{(2)}\rrangle$ as the Majorana fidelities $\lambda_{2l}$. We also note that the first term in the last line is exactly what we've derived in Eq.~\ref{eq:noisy_matchgate_coeff} if we assume SPAM error-free. Therefore, we choose $m=1$ and have
\begin{equation}
\begin{split}
    & c_{2l}(1) = \sum_{b\in\{0,1\}^n} \llangle b^{\otimes 2} |(\mathcal{I} \otimes \Lambda)|\Upsilon_k^{(2)}\rrangle \llangle \Upsilon_k^{(2)} |\mathbf{0}^{\otimes 2}\rrangle\\
    & = \begin{pmatrix}
        2n\\ 2l
    \end{pmatrix}^{-1} \sum_{b\in\{0,1\}^n} \sum_{S,S' \in \left(\begin{smallmatrix}
        [2n]\\ 2l
    \end{smallmatrix}\right)} \llangle b|\gamma_S\rrangle \llangle b|\Lambda|\gamma_S\rrangle \llangle \gamma_{S'}|\mathbf{0}\rrangle^2\\
    & = \frac{1}{2^n} \begin{pmatrix}
        2n \\ 2l
    \end{pmatrix}^{-1} 
    \begin{pmatrix}
        n \\ l
    \end{pmatrix} \sum_{T\in \left(\begin{smallmatrix}
        [n]\\ l
    \end{smallmatrix}\right)} \llangle \gamma_{\text{pairs}(T)}|\Lambda|\gamma_{\text{pairs}(T)}\rrangle\\
    & = \frac{1}{2^n} \begin{pmatrix}
        n \\ l
    \end{pmatrix} \widetilde{f}_{2l}.
\end{split}
\end{equation}
We now have established the relationship between $c_{2l}$ and $\widetilde{f}_{2l}$. Therefore the noisy eigenvalues can be extracted either by fitting $c_{2l}(m)$ or simply measuring $c_{2l}(1)$. For $m$ other than 1, we have:
\begin{equation}
    c_{2l}(m) = c_{2l}(1) \left[\begin{pmatrix}
        2n\\ 2l
    \end{pmatrix}^{-1} \sum_{S \in \left(\begin{smallmatrix}
        [2n]\\ 2l
    \end{smallmatrix}\right)} \llangle \gamma_S|\Lambda|\gamma_S\rrangle\right]^{(m-1)},
\end{equation}
which provides a microscopic origin for $\lambda_{2l}$.

In actual hardware experiments, SPAM error is unavoidable. For measurement error, it satisfies the GTM assumption for noisy shadows and therefore it's not a concern. For state preparation error, its effect is absorbed into the $c_{2l}(1)$ prefactor~\cite{helsen2022matchgate} above, by replacing $|\mathbf{0}\rrangle$ with $|\widetilde{\mathbf{0}}\rrangle$ in the first equality of Eq.~\ref{eq:c_2l(m)}. We further note that the state preparation error in the shadow experiments, i.e., the noise effects in preparing $\rho$ can also be taken into account in the above framework, as long as we could mimic as much as possible the state preparation circuit in determining $\widetilde{f}_{2l}$. This sets the motivation for the revised robust Matchgate shadow protocol introduced in Sec.~\ref{subsection:robust_shadow}. \textcolor{black}{We note that if the state preparation noise channel $\Lambda'$ commutes with the twirl $\mathcal{M}$, i.e., $\Lambda'\circ \mathcal{M} = \mathcal{M} \circ \Lambda'$, our revised scheme would correct the state preparation error perfectly, assuming $\Lambda'$ satisfies the GTM assumption. However, this condition may not be strictly satisfied in hardware.}

\subsection{Variance bound}
The variance of $c_{2l}(m)$, defined in Eq.~\ref{eq:c_2l}, is important to determine the efficiency of the robust protocol. A detailed derivation can be found in the Appendix of Ref.~\cite{helsen2022matchgate, wu2023error}, assuming no noise.

We numerically test the sampling complexity to determine $\widetilde{f}_{2l}$, as shown in Fig.~\ref{fig:calibration_sampling_complexity}, and find that it scales closely to $O(n^{2.95})$. The complexity comes from bounding the variance relative to $f_{n}$, the smallest coefficient among all the eigenvalues in the $n$-qubit Matchgate channel.

\begin{figure}[hbt!]
    \centering
    \includegraphics[width=0.45\textwidth]{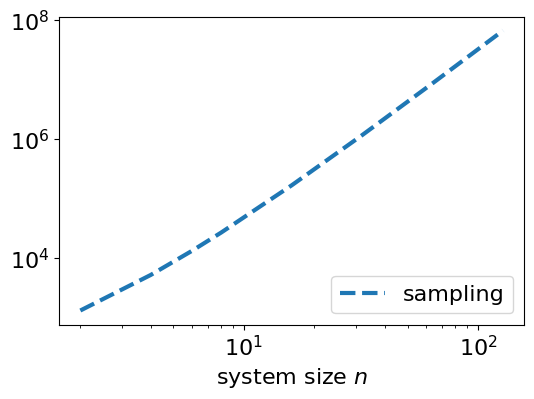}
    \caption{Sampling complexity for determining the Matchgate channel coefficients (with $1\%$ error threshold) scales closely to $O(n^{2.95})$ in the system size limit.}
    \label{fig:calibration_sampling_complexity}
\end{figure}

\section{Implementation of Matchgate shadows}  \label{app:implementation}

The Matchgate circuits are constructed using the open source code located at~\cite{Zhao2021Github}, which requires $O(n^2)$ gates implemented in $O(n)$ depth. Note that the current implementation assumes an even number of particles in the system studied. A generalization into odd-particle cases can be addressed with ancillary qubits, as discussed in App. A of Ref.~\cite{wan2023matchgate}.

\section{Quantum defect embedding theory} \label{app:qdet}

Quantum embedding theories~\cite{sun2016quantum} are frameworks to solve the time-independent Schr\"odinger equation for a system of electrons by separating the problem into the calculation of the energy levels or density of an active space and those of the remaining environment. Each part of the system is described at the quantum-mechanical level, with the active space being treated with a more accurate and computationally more expensive theoretical method than the environment.

Spin defects, e.g., NV center in diamond, are particular suitable for an embedding description as they can be naturally partitioned into the defect center and the host material. Recently, a Green's function-based quantum embedding theory was proposed for the calculation of defect properties, denoted as quantum defect embedding theory (QDET). We'll provide a high-level description of this method in this section, and we direct interested readers to Refs.~\cite{ma2021quantum, sheng2022green} for more details. QDET initiates from a mean-field electronic structure calculation, typically DFT, on a supercell (with hundreds/thousands of atoms) representing the point-defect of interest hosted in a pristine solid. Then an active space is defined by a sub-set of single particle orbitals localized around the defect center, whose excitations are described by an effective Hamiltonian $H_{\text{eff}}$. The effective potential entering $H_{\text{eff}}$ is evaluated by computing the effect of the environment onto the active space with many-body perturbation theory techniques~\cite{sheng2022green}. The Hamiltonian can be solved either using a full configuration interaction (FCI) approach on classical computers, or using quantum algorithms on quantum computers.

In essence, using QDET allows one to reduce the complexity of evaluating many-body states of a small guest region embedded in a large host system: the problem is reduced to diagonalizing a many-body Hamiltonian simply defined on an active space, where the number of degrees of freedom is much smaller than that required to describe the entire supercell of hundreds of atoms.

\section{Quantum simulation details} \label{app:quantum_simulations}

In this section, we provide detailed information on quantum simulations performed in this work, on both quantum hardware platforms and on simulators that can emulate the effects of noise.

\subsection{IBM Q experiments for hydrogen}
We study the hydrogen molecule with five different bond distances, ranging from $0.75 \text{\r{A}}$ to $2.75 \text{\r{A}}$ using the minimal STO-3G basis set. Restricted HF orbitals are used as inputs for subsequent VQE calculation on a noisy quantum simulator (to obtain the quantum trial states) and QC-QMC calculation on quantum hardware. We used 4 qubits on IBM Quantum Hanoi superconducting qubit device, allocated based on the Jordan-Wigner mapping of spin-orbitals to qubits.

\subsubsection{Quantum trial state}
As we introduced in the main text, the quantum trial state adopted in the QC-QMC scheme has to satisfy two requirements: i). $V_T|\Psi_I\rangle = |\Psi_T\rangle$; ii). $V_T|\mathbf{0}\rangle = |\mathbf{0}\rangle$. This implies that the trial circuit should conserve the particle number of the initial states at least in the $\{0, \zeta\}$-particle subspace. We therefore chose the UCCSD ansatz. The issue with conventional UCCSD ansatz is that the circuit is too deep, with a $O(n^4)$ two-qubit gate count, and this is usually beyond the capabilities of NISQ hardware. We, therefore, designed an alternative circuit, shown in Fig.~\ref{fig:hydrogen_shadow_circuit}, that constructs effectively a double excitation. Such a circuit assumes only linear connectivity of the underlying hardware and doesn't require any swap of the physical qubits. It is worth re-emphasizing that this tailored UCCSD ansatz was chosen specifically for the hydrogen molecule, to be sufficient for enabling our QC-QMC experiments, and does not represent a universal solution. We leave the important task for constructing a general and efficient quantum trial state for future investigations.

\begin{figure}[hbt!]
    \centering
    \includegraphics[width=0.45\textwidth]{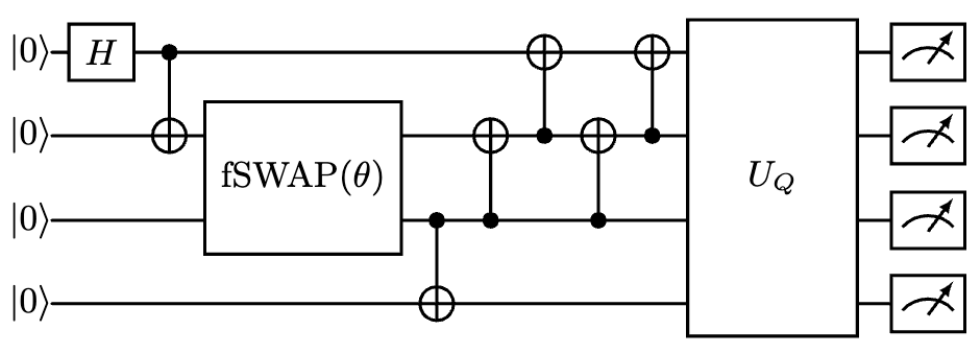}
    \caption{Quantum circuit for the performing Matchgate shadows of hydrogen. The first Hadamard and CNOT gate generates an equal superposition of $|0\rangle$ and $|\Psi_I\rangle$. They are followed by the effective UCCSD ansatz, and the Matchgate circuit $U_Q$ for shadow tomography.}
    \label{fig:hydrogen_shadow_circuit}
\end{figure}

\subsubsection{Hardware information}
The information of the physical qubits used on IBM Q Hanoi device is recorded in Table.~\ref{tab:IBM_Q_hardware}.

\begin{table*}
\caption{\label{tab:IBM_Q_hardware} IBM Q Hanoi hardware information, with data taken from \href{https://quantum.ibm.com/}{IBM Quantum Platform}.}
\begin{ruledtabular}
\begin{tabular}{cccc}
 & Oct.1st & Oct.2nd & Oct.14th \\ 
\hline
physical qubits & [1,2,3,5] & [1,2,3,5] & [6,7,10,12]\\
$T_1 \;[\mu s]$ & [132, 139, 156, 159] & [185, 114, 129, 130] & [137, 101, 112, 227]\\
$T_2 \;[\mu s]$ & [145, 266, 35, 199] & [175, 164, 33, 198] & [157, 252, 215, 262]\\
CNOT error rate $(10^{-3})$ & [3.25, 11.6, 4.71] & [3.20, 8.26, 6.38] & [7.58, 5.58, 6.30]\\
X error rate $(10^{-4})$ & [4.0, 1.3,  2.1, 1.8] & [3.4, 1.8, 2.0, 2.2] & [2.5, 2.2, 2.4, 1.8]\\
Readout error rate $(10^{-2})$ & [2.1, 1.7, 1.0, 0.07] & [1.1, 1.2, 1.1, 0.08] & [2.0, 1.2, 0.09, 1.6]\\
\end{tabular}
\end{ruledtabular}
\end{table*}

\subsubsection{Robust shadow experiments}
The noise determination phase (to measure $\widetilde{f}_{2l}$) and standard shadow experiments are performed employing 16000 circuits each on the same set of physical qubits on the same day, to ensure the noise condition stays the same. Note that the original proposal of shadow tomography~\cite{huang2020predicting} assumes using a single shot for each shadow circuit. For better use of the quantum resources, we followed Ref.~\cite{zhou2023performance} and all the circuits are executed with 1024 shots. The circuit used for determining $\widetilde{f}_{2l}$ only differs in the absence of the first Hadamard gate from the Matchgate shadow circuits used to compute the overlap. This construction ensures that the noise in preparing $\rho$ for QC-AFQMC algorithm is captured as much as possible in computing $\widetilde{f}_{2l}$. The measured values of $\widetilde{f}_{2l}$ are summarized in Table.~\ref{tab:tilde_f}.

\begin{table}
\caption{\label{tab:tilde_f} Data of the noisy eigenvalues of the 4-qubit Matchgate channel, with experiments performed on the IBM Q Hanoi device on Oct 1st, 2nd and 14th. These noisy eigenvalues are used for the robust shadow protocol considering the state preparation error (SP). Specifically, shadows of hydrogen bond distance 0.75 \text{\r{A}} are collected on Oct 14th; shadows for 2.25 \text{\r{A}} are obtained on Oct 1st while the shadows for the rest three are obtained on Oct 2nd.}
\begin{ruledtabular}
\begin{tabular}{cccc}
IBM Q Hanoi & Oct.1st & Oct.2nd & Oct.14th \\ 
\hline
physical qubits & [1,2,3,5] & [1,2,3,5] & [6,7,10,12]\\
$\widetilde{f}_0$ & 1.0 & 1.0 & 1.0\\
$\widetilde{f}_2$ & 0.1083 & 0.1146 & 0.1030\\
$\widetilde{f}_4$ & 0.0629 & 0.0649 & 0.0565\\
$\widetilde{f}_6$ & 0.0961 & 0.1042 & 0.0940\\
$\widetilde{f}_8$ & 0.6461 & 0.7121 & 0.6417\\
\end{tabular}
\end{ruledtabular}
\end{table}

Having determined the values of $\widetilde{f}_{2l}$, one can proceed to apply the robust Matchgate shadows procedure to overlaps, and overlap ratios. In the main text, we plotted the mean absolute error (MAE) of 120 overlap ratios w.r.t. the number of Matchgate shadow circuits used, and observed that the robust protocol was able to mitigate the effects of noise. Here, we plot the MAE of 36 separate overlap ratios, randomly chosen from the total of 120 tested in the main text, shown in Fig.~\ref{fig:MAE_ratio_each}. We can conclude that the noise resilience is present for each trial, and is not an artifact of averaging.

\begin{figure*}[hbt!]
    \centering
    \includegraphics[width=\textwidth]{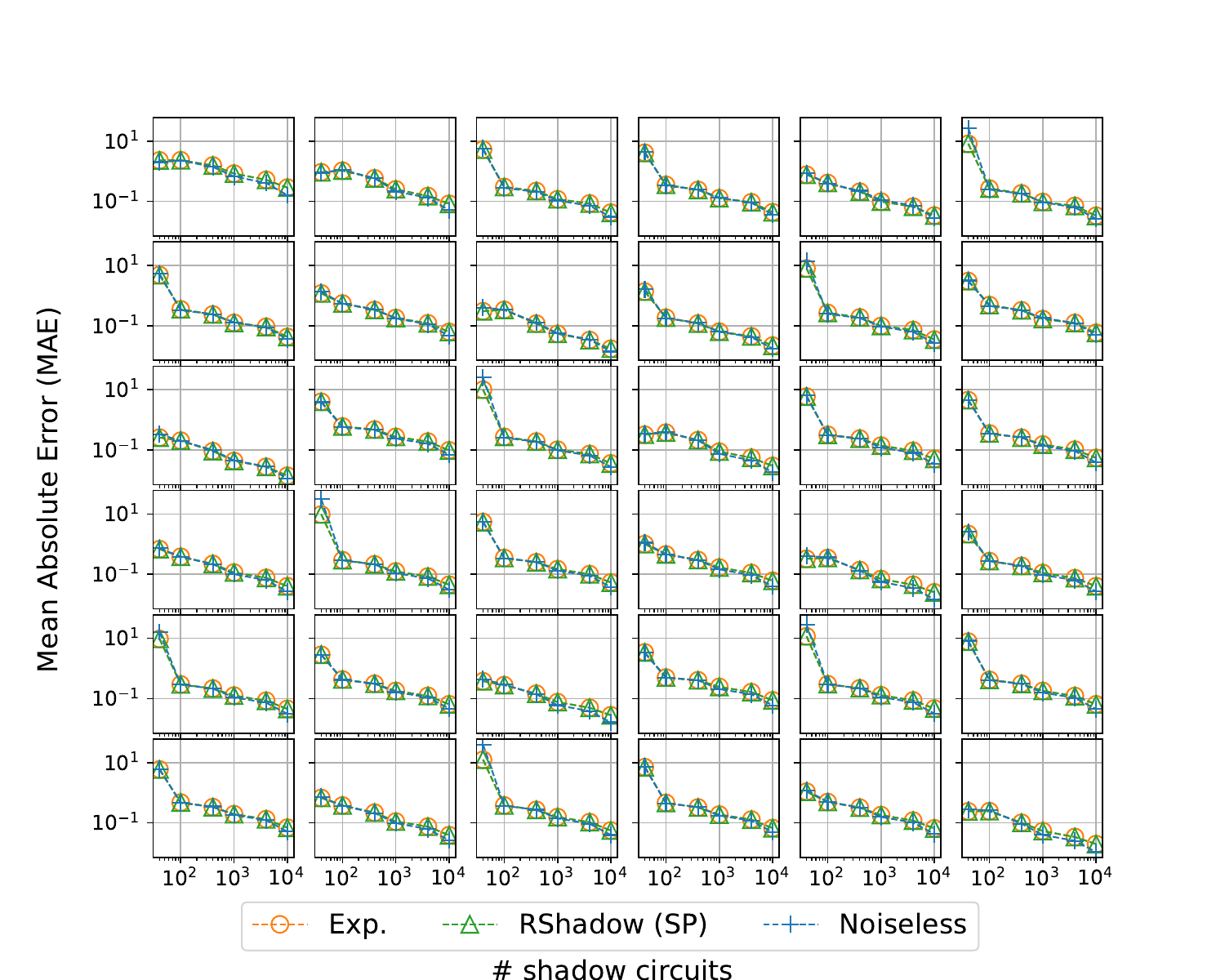}
    \caption{The mean absolute error (MAE) of 36 overlap ratios randomly selected from the tested 120 overlap ratios' pool. For each overlap ratio, noisy results from IBM Hanoi, w/wo the robust shadow correction (orange circle/green triangle) are all in good agreement with the noiseless reference value (blue cross). This verifies the noise resilience discussed in the main text.}
    \label{fig:MAE_ratio_each}
\end{figure*}

\subsubsection{Classical post-processing} \label{app:classical_pp}

As discussed in the main text, in the interest of conserving computational resources, only two of the five data points (solid points) in Fig.~\ref{fig:H2_AFQMC} were obtained using the scalable Matchgate shadows approach. The remaining three data points were obtained using an exponentially scaling approach used in Ref.~\cite{huggins2022unbiasing} that is ultimately more efficient than the scalable Matchgate approach for small system sizes. We verified that the two schemes give the same results for overlap amplitudes. This exponentially scaling approach works by first computing and tabulating the local quantities, i.e., overlap amplitudes, force bias and local energies, for all of the canonical computational basis states (of which there are 4 for hydrogen). The local quantities of arbitrary walker states can then be computed by decomposing the Slater determinant into a linear combination of the 4 basis states of hydrogen, and then taking the corresponding linear combination of the tabulated data. For small molecules like hydrogen, this yields a very efficient simulation. However, for larger molecules an arbitrary Slater determinant would decompose into a number of canonical computational basis states scaling superpolynomially with $n$, rendering this method unscalable.

\subsection{IonQ experiments for NV center}

The effective Hamiltonian describing the strongly correlated electronic states of the NV center is generated by QDET. The detailed information of classical calculations is documented in Ref.~\cite{huang2023quantum}. The NV center in diamond is simulated using 4 qubits on the IonQ Aria trapped ion quantum computer.

\subsubsection{Quantum trial state}

Since the ground state of NV center is an equal superposition of two Slater determinants~\cite{huang2022simulating}, we can generalize the quantum trial state we used for the hydrogen molecule and apply it to the NV case. The circuit is shown in Fig.~\ref{fig:nv_shadow_circuit}.

\begin{figure}[hbt!]
    \centering
    \includegraphics[width=0.45\textwidth]{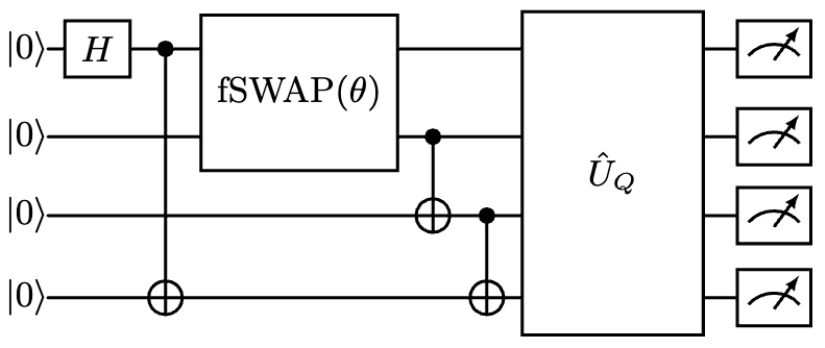}
    \caption{Quantum circuit for the performing Matchgate shadows of NV center. The first Hadamard and CNOT gate generates an equal superposition of $|0\rangle$ and $|\Psi_I\rangle$. They are followed by the effective UCCSD ansatz, and the Matchgate circuit $U_Q$ for shadow tomography.}
    \label{fig:nv_shadow_circuit}
\end{figure}

\subsubsection{Hardware information}

The IonQ Aria device has an average of 99.97\% single-qubit gate fidelity and 98.6\% two-qubit gate fidelity, with all-to-all qubit connectivity. The average readout fidelity is 99.55\%. Detailed information about the device can be found on the \href{https://ionq.com/}{IonQ website}.

\subsection{Noisy simulations of the water molecule}
The water molecule is a more strongly correlated problem than hydrogen, with more than one correlated pair of electrons. We simulated it at equilibrium geometry with a (4e, 4o) active space based on the restricted Hartree Fock canonical orbitals, following Ref.~\cite{ryabinkin2018qubit}, using 8 qubits on the Pennylane quantum simulator~\cite{bergholm2018pennylane} with various noise models. The 8 qubits are associated with each orbital according to ascending order in energy with spin-up and down alternations.

\subsubsection{Quantum trial state}
Since the water molecule has a perfect pairing of electrons in the ground state, we consider a parameterized quantum circuit employing a 4-qubit double-excitation gate $U_{\text{DE}}(\theta)$:
\begin{equation}
    U_{\text{DE}}(\theta) |1100\rangle = \cos\left(\frac{\theta}{2}\right) |1100\rangle - \sin\left(\frac{\theta}{2}\right) |0011\rangle.
\end{equation}
This gate is implemented between qubit indexed $[0,1,4,5],\; [2,3,6,7],\; [1,2,7,4],\; [0,3,6,5]$. These indices are chosen because they construct those Slater determinants with the largest coefficients from a CASCI calculation. This ansatz is then fed to a (noiseless) VQE calculation to produce a wavefunction with an energy error of approximately 10 mHa at the equilibrium geometry. This quantum trial state is only used for demonstrating the noise resilience in evaluating the overlap ratios using noise models, as the full QC-QMC calculations for the water molecule would be too costly to carry out.

\bibliography{bibliography.bib}

\end{document}